\documentclass[onefignum,onetabnum]{siamart171218}



\usepackage{lipsum}
\usepackage{amsfonts}
\usepackage{graphicx}
\usepackage{epstopdf}
\usepackage{algorithmic}
\ifpdf
  \DeclareGraphicsExtensions{.eps,.pdf,.png,.jpg}
\else
  \DeclareGraphicsExtensions{.eps}
\fi


\newsiamremark{remark}{Remark}
\newsiamremark{hypothesis}{Hypothesis}
\crefname{hypothesis}{Hypothesis}{Hypotheses}
\newsiamthm{claim}{Claim}

\headers{Singular DMD}{J. A. Rosenfeld and R. Kamalapurkar}

\title{Singular Dynamic Mode Decompositions\thanks{A YouTube playlist to accompany this work may be found at: \url{https://youtube.com/playlist?list=PLldiDnQu2phsZdFP3nHoGnk_Aq-kp_4nE}
\funding{This research was supported by the Air Force Office of Scientific Research (AFOSR) under contract numbers FA9550-20-1-0127 and FA9550-21-1-0134, and the National Science Foundation (NSF) under award numbers 2027976 and 1900364. Any opinions, findings and conclusions or recommendations expressed in this material are those of the author(s) and do not necessarily reflect the views of the sponsoring agencies.}}}

\author{Joel A. Rosenfeld%
\thanks{Department of Mathematics and Statistics, University of South Florida, Tampa, FL 33620 USA (\email{rosenfeldj@usf.edu}, \url{http://thelearningdock.org})}%
\and Rushikesh Kamalapurkar%
\thanks{Department of Mechanical and Aerospace Engineering, Oklahoma State University, Stillwater, OK 74078 USA (\email{rushikesh.kamalapurkar@okstate.edu}, \url{https://scc-lab.github.io})}}%

\usepackage{amsopn}
\DeclareMathOperator{\diag}{diag}

\makeatletter
\newcommand*{\addFileDependency}[1]{
  \typeout{(#1)}
  \@addtofilelist{#1}
  \IfFileExists{#1}{}{\typeout{No file #1.}}
}
\makeatother

\newcommand*{\myexternaldocument}[1]{%
    \externaldocument{#1}%
    \addFileDependency{#1.tex}%
    \addFileDependency{#1.aux}%
}

\ifpdf
\hypersetup{
  pdftitle={Singular Dynamic Mode Decomposition},
  pdfauthor={J. A. Rosenfeld, R. Kamalapurkar}
}
\fi


\myexternaldocument{ex_supplement}


\usepackage{savesym}

\usepackage{graphicx}
\usepackage{amssymb,amsfonts,amsmath}
\usepackage[caption=false]{subfig}
\usepackage[normalem]{ulem}
\useunder{\uline}{\ul}{}

\usepackage[normalem]{ulem}
\useunder{\uline}{\ul}{}
\usepackage{algorithmic}

\newtheorem{example}{Example}

\usepackage{color}

\begin{document}

\maketitle

\begin{abstract}
  This manuscript is aimed at addressing several long standing limitations of dynamic mode decompositions in the application of Koopman analysis. Principle among these limitations are the convergence of associated Dynamic Mode Decomposition algorithms and the existence of Koopman modes. To address these limitations, two major modifications are made, where Koopman operators are removed from the analysis in light of Liouville operators (known as Koopman generators in special cases), and these operators are shown to be compact for certain pairs of Hilbert spaces selected separately as the domain and range of the operator. While eigenfunctions are discarded in the general analysis, a viable reconstruction algorithm is still demonstrated, and the sacrifice of eigenfunctions realizes the theoretical goals of DMD analysis that have yet to be achieved in other contexts. However, in the case where the domain is embedded in the range, an eigenfunction approach is still achievable, where a more typical DMD routine is established, but that leverages a finite rank representation that converges in norm. The manuscript concludes with the description of two Dynamic Mode Decomposition algorithms that converges when a dense collection of occupation kernels, arising from the data, are leveraged in the analysis.
\end{abstract}



\section{Introduction}

This manuscript is aimed at addressing several long standing limitations of dynamic mode decompositions (DMD) in the application of Koopman analysis. Principle among these limitations are the convergence of associated Dynamic Mode Decomposition algorithms and the existence of Koopman modes, where the first has only been established with respect to the strong operator topology (which does not guarantee the convergence of the spectrum), and the second is only guaranteed to exist when the Koopman operator is compact as well as self-adjoint or normal (which is a rare occurrence over the typical sample spaces).

DMD methods are data analysis methods that aim to decompose a time series corresponding to a nonlinear dynamical system into a collection of dynamic modes \cite{kutz2016dynamic,budivsic2012applied,mezic2005spectral,korda2018convergence}. When they are effective, a given time series can be expressed as a linear combination of dynamic modes and exponential functions whose growth rates are derived from the spectrum of a finite rank representation of a particular operator, usually the Koopman operator.


The use of Koopman operators places certain constraints on the continuous time dynamics that can be studied with DMD methods. In particular, Koopman operators analyze continuous time dynamics through a discrete time proxy obtained by fixing a time-step for a continuous time system \cite{mauroy2016global}. However, only a small subset of continuous time dynamics satisfy the forward invariance property necessary to obtain a discretization  \cite{rosenfeld2019dynamic}. Moreover, to establish convergence guarantees for DMD routines, additional structure is required of Koopman operators, where a sequence of finite rank operators converge to Koopman operators in norm only if the Koopman operator is compact \cite{pedersen2012analysis}. Compactness is rarely satisfied for Koopman operators, where the Koopman operators obtained through discretizations of the simplest dynamical system $\dot x = 0$ are the identity operator and are not compact. A partial result has been demonstrated for when Koopman operators are bounded in \cite{korda2018convergence}, where a sequence of finite rank operators converge to a Koopman operator in the Strong Operator Topology (SOT). However, SOT convergence does not guarantee convergence of the spectra (cf. \cite{pedersen2012analysis}), which is necessary for a DMD routine.

There are stronger theoretical difficulties associated with Koopman operators. It has been demonstrated that among the typical Hilbert spaces leveraged in sampling theory, such as the exponential dot product's \cite{carswell2003composition}, the Gaussian RBF's \cite{gonzalez2021anti}, and the polynomial kernel's native spaces as well as the classical Paley Wiener space \cite{chacon2007composition}, the only discrete time dynamics that yield a bounded Koopman operator are those dynamics that are affine. Hence, depending on the kernel function selected for the approximation of a Koopman operator, a given Koopman operator can at best be expected to be a densely defined operator, which obviates the aforementioned convergence properties.

Another motivation for the use of Koopman operators in the study of continuous time dynamical systems is a heuristic that for small timesteps the spectra and eigenfunctions of the resultant Koopman operator should be close to that of the Liouville operator representing the continuous time systems \cite{brunton2019data}. However, for two fixed timesteps, the corresponding Koopman operators can have different collections of eigenfunctions and eigenvalues, and these are artifacts of the discretization itself \cite{gonzalez2021anti}. Since in most cases the Koopman operators are used for this analysis, it is not clear if there is a method for distinguishing which of these eigenfunctions and eigenvalues are a product of the discretization and which are fundamental to the dynamics themselves.

Finally, and perhaps most alarming, is that Koopman modes themselves exist for only a small subset of Koopman operators \cite{gonzalez2021anti}. Specifically, if a Koopman operator is self-adjoint, then it admits an orthonormal basis of eigenfunctions \cite{brunton2019data}, and the projection of the full state observable onto this basis yields a collection of (vector valued) coefficients attached to these basis functions. These coefficients are known as Koopman Modes or Dynamic Modes. Koopman operators are not necessarily diagonalizable over a given Hilbert space, and when they are diagonalizable, their complete eigenbasis is not always an orthogonal basis. Hence, as the full state observable is projected on larger and larger finite collections of eigenfunctions, the weights attached to each eigenfunction will change as more are added. This adjustment to the weights with the addition of more eigenfunctions is why a series expansion is only ever given in Hilbert space theory when there is an orthonormal basis of eigenfunctions, otherwise an expansion is written as limit of finite linear combinations of eigenfunctions \cite{pedersen2012analysis}.\footnote{There are notable exceptions, such as in atomic decompositions \cite{zhu2012analysis}. However, this is another rare property of a basis.}

To address these limitations, two major modifications are made, where Koopman operators are removed from the analysis in light of Liouville operators (known as Koopman generators in special cases), and these operators are shown to be compact for certain pairs of Hilbert spaces selected separately as the domain and range of the operator. (This separation of the domain and range is not possible for Koopman operators.) While eigenfunctions are discarded in the general analysis, a viable reconstruction algorithm is still achievable, and the sacrifice of eigenfunctions realizes the theoretical goals of DMD analysis that have yet to be achieved in other contexts. It should be noted that Liouville and Koopman operators rarely admit a diagonalization, and as such, this approach discards that additional assumption on the operators.

However, at the cost of well defined Dynamic Modes, an eigenfunction approach is still achievable when the domain is embedded in the range of the operator. This allows for the search of eigenfunctions through finite rank approximations that converge to the Liouville operator. The result is a norm convergence DMD routine (using eigenfunctions), which is an achievement over the SOT convergent results previously established in the field \cite{korda2018convergence}. This gives a balance between the two convergence methods presented in this manuscript, where well defined modes come at the price of ease of reconstruction, and a straightforward reconstruction algorithm may not have well defined limiting dynamic modes (a problem shared with all other DMD routines).

To be explicit, the singular DMD approach yields the following benefits:
\begin{enumerate}
    \item Eliminates the requirement of forward invariance. (Aligning with the method given in \cite{rosenfeld2019occupation}).
    \item Provides well defined Dynamic Modes.
    \item Approximates a compact operator, thereby achieving convergence.
    \item Yields an orthonormal basis through which the full state observable may be decomposed.
\end{enumerate}

However, this achievement comes at the expense of eigenfunctions of the given operator. As it turns out, the abandonment of eigenfunctions for the analysis does not actually limit the applicability, where even for very simple dynamics, such as $f(x) = x^2$ in the one dimensional setting, the corresponding Liouville operators will have no eigenfunctions over any space of continuous functions. For the present example, the solution to the eigenfunction equation, $g'(x)x^2 = \lambda g(x)$, gives $g(x) = e^{\lambda/x}$ for $\lambda \neq 0$, a discontinuous function on the real line. Additionally, reconstruction of the original time series may still be achieved using Runge-Kutta like methods.

Where the DMD routine leveraging the case where the domain is embedded in the range provides the following:
\begin{enumerate}
    \item Eliminates the requirement of forward invariance. (Aligning with the method given in \cite{rosenfeld2019occupation}).
    \item Approximates a compact operator, thereby achieving convergence.
    \item Yields an approximate eigenbasis through which the full state observable may be decomposed.
    \item An ease of reconstruction through the eigenfunctions.
\end{enumerate}

It should be noted that there have been several attempts at providing compact operators for the study of DMD. The approaches \cite{das2021reproducing} and \cite{rosenfeld2019dynamic} find compact operators through the multiplication of auxiliary operator against Koopman and Liouville operators respectively. However, the resultant operators are not the operators that truly correspond to the dynamics in question, and as such, the decomposition of those operators can only achieve heuristic results. The approach taken presently gives compact Liouville operators directly connected with the continuous time dynamics.



\section{Reproducing Kernel Hilbert Spaces}

A reproducing kernel Hilbert space (RKHS), $H$, over a set $X$ is a space of functions from $X$ to $\mathbb{R}$ such that the functional of evaluation, $E_x g := g(x)$ is bounded for every $x \in X$. By the Riesz theorem, this means for each $x \in X$ there exists a function $K_x \in H$ such that $\langle f, K_x \rangle_H = f(x)$ for all $f$. The function $K_x$ is called the kernel function centered at $X$, and the function $K(x,y) := \langle K_y, K_x \rangle_H$ is called the kernel function corresponding to $H$. Note that $K_y(x) = K(x,y).$ Classical examples of kernel functions in data science are the Gaussian radial basis function for $\mu > 0$, $K(x,y) = \exp(-\frac{1}{\mu} \| x- y\|^2)$, and the exponential dot product kernel, $\exp(\frac{1}{\mu} x^T y)$ \cite{steinwart2008support}.

The function $K(x,y)$ is a positive definite kernel function, which means that for every finite collection of points, $\{ x_1, \ldots, x_M \} \subset X$, the Gram matrix $( K(x_i,x_j) )_{i,j=1}^M$ is positive definite. For each positive definite kernel function, there exists a unique RKHS for which $K$ is the kernel function for that space by the Aronszajn-Moore theorem in \cite{aronszajn1950theory}.

Given a RKHS, $H$, over $X \in \mathbb{R}^n$ consisting of continuous functions and given a continuous signal, $\theta:[0,T] \to X$, the linear functional $g \mapsto \int_0^T g(\theta(t)) dt$ is bounded. Hence, there exist a function, $\Gamma_{\theta} \in H$, such that $\langle g, \Gamma_{\theta} \rangle_{H} = \int_0^T g(\theta(t)) dt$ for all $g \in H$. The function $\Gamma_{\theta}$ is called the occupation kernel in $H$ corresponding to $\theta$. These occupation kernels were first introduced in \cite{rosenfeld2019occupation,rosenfeld2019occupation2}.

\section{Compact Liouville Operators}
This section demonstrates the existence of compact Liouville operators, given formally as $A_f g(x) = \nabla g(x) f(x)$, where compactness is achieved through the consideration of differing spaces for the domain an range of the operator. Section \ref{sec:classical} builds on a classical result where differentiation between differing weighted Hardy spaces can be readily shown to be compact. Following a similar argument, Section \ref{sec:severalvariables} presents several examples of compact Liouville operators over spaces of functions of several variables. We would like to emphasize that the collections of compact Liouville operators are not restricted to these particular pairs of functions spaces, but rather this section provides several examples demonstrating the existence of such operators, thereby validating the approach in the sequel.

\subsection{Inspirations from Classical Function Theory}\label{sec:classical}
Consider the weighted Hardy spaces (cf. \cite{beneteau2018remarks}), $H^2_{\omega}$, where $\omega = \{ \omega_{m} \}_{m=0}^\infty$ is a sequence of positive real numbers such that $|\omega_{m+1}/\omega_m| \to 1$, and $g(z) = \sum_{m=0}^\infty a_m z^m$ is a function in $H^2_{\omega}$ if the coefficients of $g$ satisfy $\|g\|_{H_{\omega}^2}^2 := \sum_{m=0}^\infty \omega_m |a_m|^2 < \infty$. Each weighted Hardy space is a RKHS over the complex unit disc $\mathbb{D} =\{ z \in \mathbb{C} : |z| = 1\}$ with kernel function given as $K_{\omega}(z,w) = \sum_{m=0}^\infty \omega_m z^m \bar w^m$, and the monomials $\left\{ \frac{z^m}{\sqrt{\omega_m}} \right\}_{m=0}^\infty$ form an orthonormal basis for each space.

The weighted Hardy space corresponding to the sequence $\omega_{(0)} := \{ 1, 1, \ldots \}$ is the classical Hardy space, $H^2$, that was introduced by Riesz in 1923 \cite{riesz1923randwerte}. The Dirichlet space corresponds to the weight sequence $\omega_{(1)} = \{ (m+1) \}_{m=0}^\infty$, and the Bergman space corresponds to $\omega_{(-1)} = \{ (m+1)^{-1} \}_{m=0}^\infty$. Of interest here is the weighted Hardy space corresponding to $\omega_{(3)} := \{ m^3 \}_{m=0}^\infty$, which will be denoted as $H^{2}_3$ for convenience.

It is immediately evident that the operation of differentiation on elements of $H^2_3$ is bounded as an operator from $H^2_3$ to $H^2$. The reason for this inclusion can be seen directly through the power series for these function spaces. In particular, a function $h(z) = \sum_{m=0}^\infty a_m z^m$ is in $H^2_2$ if $\| h\|_{H_2^2} = \sum_{m=0}^\infty (m+1)^3 |a_m|^2 < \infty$, and in the Hardy space if $\| h\|_{H^2} = \sum_{m=0}^\infty  |a_m|^2 < \infty.$

A function $g$ in $H^2_3$ has derivative $g'(z) = \sum_{m=1}^\infty m a_m z^{m-1} = \sum_{m=0} (m+1) a_{m+1} z^{m}$, and by considering the Hardy space norm,
\[\left\| \frac{d}{dz} g \right\|_{H^2} = \sum_{m=0}^\infty (m+1)^2 |a_{m+1}|^2 \le \sum_{m=0}^\infty (m+1)^3 |a_{m+1}|^2,\]
but this is exactly the $H^2_3$ norm on $g$ less the constant term. Hence differentiation is a bounded operator from the space $H^2_3$ to the Hardy space with operator norm at most $1$.

\begin{proposition}The operator $\frac{d}{dz} : H_3^2 \to H^2$ is compact. Moreover, if $f:\overline{\mathbb{D}} \to \mathbb{D}$ is a bounded analytic function corresponding to a bounded multiplication operator, $M_f g := g(x) f(x)$, over the Hardy space, then the Liouville operator, $A_f := M_f\frac{d}{dz}$, is compact from $H_3^2$ to $H^2$.\end{proposition}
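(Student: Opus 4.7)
The plan is to prove compactness of $\tfrac{d}{dz}: H_3^2 \to H^2$ by exhibiting it as the operator-norm limit of a sequence of finite rank operators, and then deduce compactness of $A_f = M_f \tfrac{d}{dz}$ from the standard fact that the composition of a bounded operator with a compact operator is again compact.

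For the first part, I would work in the orthonormal bases of the two spaces, so that every $g \in H_3^2$ is written as $g(z) = \sum_{m=0}^\infty a_m z^m$ with $\|g\|_{H_3^2}^2 = \sum_{m=0}^\infty \omega_m |a_m|^2$ for weights $\omega_m$ of order $m^3$. For each $N \ge 1$, define the truncation $D_N g(z) := \sum_{m=1}^{N} m a_m z^{m-1}$, which is finite rank because it factors through the span of $\{1, z, \ldots, z^{N-1}\}$ in $H^2$. The residual $(\tfrac{d}{dz} - D_N)g$ then has Hardy-space norm-squared equal to $\sum_{m=N+1}^\infty m^2 |a_m|^2$, and the key computation is to bound this by $c_N \|g\|_{H_3^2}^2$ for a constant $c_N$ tending to $0$. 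Writing $m^2 |a_m|^2 = \tfrac{m^2}{\omega_m} \cdot \omega_m |a_m|^2$ and using that $m^2/\omega_m$ behaves like $1/m$, the supremum of $m^2/\omega_m$ over $m \ge N+1$ is $O(1/N)$, which gives the uniform estimate $\|\tfrac{d}{dz} - D_N\|_{\mathrm{op}} \to 0$. Hence $\tfrac{d}{dz}$ is the operator-norm limit of finite rank operators, and is therefore compact.

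For the second part, the hypothesis is that $M_f : H^2 \to H^2$ is a bounded multiplication operator. Since $\tfrac{d}{dz}$ is a compact map $H_3^2 \to H^2$ by the first part, the composition $A_f = M_f \circ \tfrac{d}{dz}$ takes $H_3^2$ into $H^2$ and is the composition of a compact operator followed by a bounded one, which is compact (bounded images of relatively compact sets are relatively compact). This yields the stated compactness of $A_f : H_3^2 \to H^2$ with essentially no additional work.

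The only mild obstacle I foresee is bookkeeping around the indexing of the weight sequence near $m=0$ (the stated $\omega_{(3)} = \{m^3\}$ has a zero first entry, but the later calculation uses $(m+1)^3$; in either convention the constant term of $g$ is annihilated by $\tfrac{d}{dz}$, so the issue is cosmetic). Beyond that, the proof reduces to a one-line spectral estimate on the diagonal operator $\tfrac{d}{dz}$ in the monomial basis, plus the general principle that compactness is preserved under composition with bounded operators on either side.
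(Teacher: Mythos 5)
Your proposal is correct and follows essentially the same route as the paper: approximate $\tfrac{d}{dz}$ by truncations onto finite spans of monomials, bound the tail via the ratio of the weights to get operator-norm convergence, and then compose with the bounded multiplier $M_f$ (the paper simply writes out the standard ``bounded $\circ$ compact is compact'' fact by noting $M_f P_{\alpha_M}\tfrac{d}{dz}$ is finite rank and converges in norm). Your remark about the indexing of the weight sequence at $m=0$ is accurate and, as you say, cosmetic.
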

\begin{proof}
To see that differentiation is a compact operator from the $H^2_3$ to the Hardy space, we may select a sequence of finite rank operators that converge in norm to differentiation. In particular, note that the monomials form an orthonormal basis of the Hardy space as is evident from the given norm. Let $\alpha_M := \{ 1, z, \ldots, z^M\}$ be the first $M$ monomials in $z$, and let $P_{\alpha_M}$ be the projection onto the span of these monomials. The operator $P_{\alpha_M} \frac{d}{dz}$ is a finite rank operator, where the image of this operator is a polynomial of degree up to $M$.

To demonstrate that this sequence of finite rank operators converges to differentiation in the operator norm it must be shown that the difference under the operator norm, \[\left\| P_{\alpha_M} \frac{d}{dz} - \frac{d}{dz} \right\|_{H_3^2}^{H^2} := \sup_{g \in H_3^2} \frac{\| P_{\alpha_M} \frac{d}{dz} g - \frac{d}{dz} g\|_{H^2}}{\|g\|_{H_3^2}},\]
goes to zero. Note that
\begin{gather*}\| P_{\alpha_M} \frac{d}{dz} g - \frac{d}{dz} g\|_{H^2}^2 = \sum_{m=M+1}^\infty (m+1)^2 |a_{m+1}|^2 \\= \sum_{m=M+1}^\infty \frac{1}{m+1} (m+1)^3 |a_{m+1}|^2 \le \frac{1}{M+1} \sum_{m=M+1}^\infty (m+1)^3 |a_{m+1}|^2 \le \frac{1}{M+1} \|g\|_{H_{3}^2}.\end{gather*} Hence $\left\| P_{\alpha_M} \frac{d}{dz} - \frac{d}{dz} \right\|_{H_3^2}^{H^2} \le \frac{1}{M+1} \to 0.$ This proves that differentiation is a compact operator from $H_3^2$ to $H^2$.

If a function, $f$, is a bounded analytic function on the closed unit disc, then it is the symbol for a bounded multiplier over $H^2$. Hence, the $M_f \frac{d}{dz}$ is a compact operator from $H_3^2$ to $H^2$. To be explicit, since $P_{\alpha_M} \frac{d}{dz}$ has finite rank, $M_f \left(P_{\alpha_M} \frac{d}{dz}\right)$ also has finite rank. Moreover, $\left\| M_f P_{\alpha_M} \frac{d}{dz} - M_f \frac{d}{dz} \right\|_{H_3^2}^{H^2} = \left\| M_f \left(P_{\alpha_M} \frac{d}{dz} - \frac{d}{dz}\right) \right\|_{H_3^2}^{H^2} \le \| M_f \|_{H^2}^{H^2} \left\| P_{\alpha_M} \frac{d}{dz} - \frac{d}{dz} \right\|_{H_3^2}^{H^2}  \to 0.$ Hence, $M_f \frac{d}{dz}$ is an operator norm limit of finite rank operators, and is compact. Finally, it can be seen that $M_f \frac{d}{dz} g(z) = g'(z) f(z) = A_f g(z)$, and $A_f$ is a compact Liouville operator from $H_3^2$ to $H^2$.
\end{proof}

\subsection{Compact Liouville Operators of Several Variables}\label{sec:severalvariables}

The example of the previous section demonstrated that compact Liouville operators may be obtained in one dimension. However, this is readily extended to higher dimensions through similar arguments, and in particular can be demonstrated for dot product kernels of the form $K(x,y) = (1+\mu x^T y)^{-1}$. In some cases, such as with the exponential dot product kernel and the Gaussian RBF, where the kernel functions over $\mathbb{R}^n$ decompose as a product of kernel functions over $\mathbb{R}$ for the individual variables, the establishment of compact Liouville operators from the single variable spaces to an auxiliary range RKHSs yields compact Liouville operators through tensor products of the respective spaces.

The exponential dot product kernel, with parameter $\mu > 0$, is given as $K(x,y) = exp\left(\mu x^Ty\right)$. In the single variable case, the native space for this kernel may be expressed as $F^2_{\mu}(\mathbb{R}^n) = \left\{ f(x) = \sum_{m=0}^\infty a_m x^m : \sum_{m=0}^\infty |a_m|^2 \frac{m!}{\mu^m} < \infty \right\}$. This definition can be readily extended to higher dimensions, where collection of monomials, $x^{\alpha} \frac{\mu^{|\alpha|}}{\sqrt{\alpha!}}$, with multi-indices $\alpha \in \mathbb{N}^n$ form an orthonormal basis. The norm of functions in $F_\mu^2(\mathbb{R}^n)$ will be denoted by $\|g\|_\mu.$

In this setting, if $\mu_2 > \mu_1$ (i.e. $1/\mu_1 > 1/\mu_2$), then by arguments similar to those given in the previous section, it follows that partial differentiation with respect to each variable is a compact operator from $F^2_{\mu_1}$ to $F^2_{\mu_2}$. However, since multiplication operators are unbounded from $F^2_{\mu}$ to itself for every $\mu > 0$, another step is necessary to ensure compactness.

\begin{lemma}
Suppose that $\eta < \mu$, then given any polynomial of several variables, $f$, the multiplication operator $M_{f} : F_\eta^2(\mathbb{R}^n) \to F_\mu^2(\mathbb{R}^n)$ is bounded.
\end{lemma}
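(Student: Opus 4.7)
The plan is to reduce to monomials and then carry out a direct coefficient-wise estimate in the monomial basis. By linearity of $M_f$ in $f$ and the triangle inequality, it suffices to prove that for every multi-index $\beta \in \mathbb{N}^n$ the operator $M_{x^\beta}\colon F_\eta^2(\mathbb{R}^n)\to F_\mu^2(\mathbb{R}^n)$ is bounded, with a norm bound that depends only on $\beta,\mu,\eta$. A general polynomial $f=\sum_{|\beta|\le N} c_\beta x^\beta$ is then handled by $\|M_f g\|_\mu \le \sum_\beta |c_\beta|\,\|M_{x^\beta} g\|_\mu$.

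I would then fix $\beta$ and expand an arbitrary $g\in F_\eta^2(\mathbb{R}^n)$ in the monomial basis, $g(x)=\sum_\alpha a_\alpha x^\alpha$. Using the weights that come out of the basis identification in the paper, the squared norms are
\[
\|g\|_\eta^2 \;=\; \sum_\alpha |a_\alpha|^2\,\frac{\alpha!}{\eta^{|\alpha|}},\qquad \|M_{x^\beta}g\|_\mu^2 \;=\; \sum_\alpha |a_\alpha|^2\,\frac{(\alpha+\beta)!}{\mu^{|\alpha+\beta|}},
\]
so boundedness of $M_{x^\beta}$ is equivalent to
\[
C_\beta \;:=\; \sup_{\alpha\in\mathbb{N}^n}\,\frac{(\alpha+\beta)!}{\alpha!}\cdot\frac{\eta^{|\alpha|}}{\mu^{|\alpha+\beta|}} \;<\;\infty.
\]

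The main work, and the only real obstacle, is to verify that this supremum is finite. The factor $(\alpha+\beta)!/\alpha! = \prod_{i=1}^n \prod_{j=1}^{\beta_i}(\alpha_i+j)$ is a polynomial in the entries of $\alpha$ of total degree $|\beta|$, hence $(\alpha+\beta)!/\alpha! \le (|\alpha|+|\beta|)^{|\beta|}$. The remaining factor equals $\mu^{-|\beta|}\,(\eta/\mu)^{|\alpha|}$, and since $\eta/\mu<1$ the quantity $(\eta/\mu)^{|\alpha|}$ decays geometrically in $|\alpha|$. Polynomial growth against geometric decay yields a finite supremum; a clean way to record this is to use that $t^{|\beta|}(\eta/\mu)^{t}$ attains a maximum on $[0,\infty)$ at $t=|\beta|/\log(\mu/\eta)$, producing an explicit constant $C_\beta$ depending only on $\beta,\mu,\eta$.

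Combining the two steps yields $\|M_{x^\beta} g\|_\mu^2 \le C_\beta\,\|g\|_\eta^2$, and the triangle-inequality reduction then gives $\|M_f\|_{F_\eta^2\to F_\mu^2}\le \sum_{|\beta|\le N} |c_\beta|\,C_\beta^{1/2}$, completing the proof. I do not expect serious technical issues beyond the polynomial-versus-exponential estimate; the identification of the orthonormal basis and the resulting diagonal formula for the norm reduce the problem to that single scalar inequality.
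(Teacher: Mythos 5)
Your proof is correct and follows essentially the same route as the paper's: reduce to monomial multipliers $M_{x^\beta}$ by linearity, compare the diagonal weights $\frac{(\alpha+\beta)!}{\alpha!}\cdot\frac{\eta^{|\alpha|}}{\mu^{|\alpha+\beta|}}$ in the monomial basis, and observe that polynomial growth of the factorial ratio is dominated by the geometric decay of $(\eta/\mu)^{|\alpha|}$. The only difference is that you carry out the multivariable bookkeeping explicitly, whereas the paper presents the single-variable case and asserts the extension.
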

\begin{proof}To facilitate a clarity of exposition, this will be proven with respect to functions of a single variable. The same arguments extend to the spaces of several variables, albeit with more bookkeeping.

Let $g \in F^2_{\eta}$. Then $g(x) = \sum_{m=0}^\infty a_m x^m$, and $\| g\|_\eta^2 = \sum_{m=0}^\infty |a_m|^2 \frac{m!}{\eta^m}$.

For $f \equiv 1$, $M_1$ is the identity operator. Thus, the boundedness of $M_1$ is equivalent to demonstrating that $F_\eta^2$ is boundedly included in $F_\mu^2$. In particular, note that 
\begin{gather*}
    \|M_1 g \|_\mu^2 = \| g \|_\mu^2 = \sum_{m=0}^\infty |a_m|^2 \frac{m!}{\mu^m} = \sum_{m=0}^\infty |a_m|^2 \left(\frac{\eta}{\mu}\right)^m \frac{m!}{\eta^m} \\< \sum_{m=0}^\infty |a_m|^2 \frac{m!}{\eta^m} = \| g \|^2_\eta
\end{gather*}

Fix $k \in \mathbb{N}$ and consider the multiplication operator $M_{x^k} : F^2_\eta \to F^2_\mu$ defined as $M_{x^k} g := xg$ for all $g \in F^2_{\eta}$. Note that the power series of $M_{x^k} g$ is given as $xg(x) = \sum_{m=0}^\infty a_m x^{m+k} = \sum_{m=k}^\infty a_{m-k} x^m$. Hence,
\begin{gather*}\| x^k g(x) \|^2_{\mu} = \sum_{m=k}^\infty |a_{m-k}|^2 \frac{m!}{\mu^m}= \sum_{m=0}^\infty |a_{m}|^2 \frac{(m+k)!}{\mu^{m+k}}\\ = \sum_{m=0}^\infty |a_m|^2 \frac{(m+k)!}{m!\mu^k} \frac{m!}{\mu^m} = \sum_{m=0}^\infty |a_m|^2 \left(\frac{m+k}{m!\mu^k}\right) \left(\frac{\eta}{\mu}\right)^m  \frac{m!}{\eta^m},
\end{gather*}
and as $\left(\frac{m+k}{m!\mu^k}\right) \left(\frac{\eta}{\mu}\right)^m$ is bounded as a function of $m$ by some constant $C > 0$ (owing to the exponential decay of $\left(\eta/\mu\right)^m$), it follows that $\| M_{x^k} \|_{F_\eta^2}^{F_\mu^2} < C$.

Hence, by linear combinations of monomials it has been demonstrated that a multiplication operator with polynomial symbol is a bounded operator.
\end{proof}

\begin{remark}
The authors emphasize that the collection of bounded multiplication operators between these spaces is strictly larger than the those with polynomial symbols. The purpose of this lemma is to simply support the existence of compact Liouville operators, rather than to provide a complete classification.
\end{remark}

\begin{theorem}
Let $\mu_3 > \mu_1$, and suppose that $f$ is a vector valued function over several variables, where each entry is a polynomial. Then the Liouville operator $A_f : F^2_{\mu_1}(\mathbb{R}^n) \to F^2_{\mu_3}(\mathbb{R}^n)$ defined as $A_f g = \nabla g \cdot f$ is a compact operator. 
\end{theorem}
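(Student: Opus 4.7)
The plan is to factor $A_f$ through an intermediate space $F^2_{\mu_2}(\mathbb{R}^n)$ with $\mu_1 < \mu_2 < \mu_3$, writing
\[ A_f g = \sum_{i=1}^n M_{f_i} \frac{\partial}{\partial x_i} g, \]
and to argue that each summand is a composition of a compact operator $\partial/\partial x_i : F^2_{\mu_1} \to F^2_{\mu_2}$ with the bounded multiplication operator $M_{f_i} : F^2_{\mu_2} \to F^2_{\mu_3}$ supplied by the preceding lemma. Since the composition of a compact operator with a bounded operator is compact, and finite sums of compact operators are compact, the theorem follows.

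The one new ingredient is therefore the compactness of partial differentiation $\partial_i : F^2_{\mu_1}(\mathbb{R}^n) \to F^2_{\mu_2}(\mathbb{R}^n)$ for $\mu_2 > \mu_1$. I would mimic the one-dimensional argument from Section~\ref{sec:classical}: let $P_M$ denote the orthogonal projection in $F^2_{\mu_2}(\mathbb{R}^n)$ onto the span of monomials $x^\alpha$ with $|\alpha| \le M$, and consider the finite-rank operators $P_M \partial_i$. For $g(x) = \sum_\alpha a_\alpha x^\alpha \in F^2_{\mu_1}(\mathbb{R}^n)$, a direct index shift gives
\[ \bigl\|(I-P_M)\partial_i g\bigr\|_{\mu_2}^2 = \sum_{|\alpha|> M,\, \alpha_i \ge 1} \alpha_i^2 \, |a_\alpha|^2 \, \frac{(\alpha-e_i)!}{\mu_2^{|\alpha|-1}}, \]
and after rewriting the weights in terms of $\alpha! /\mu_1^{|\alpha|}$ one picks up a factor of the form $\alpha_i \,\mu_1 (\mu_1/\mu_2)^{|\alpha|-1}$, which is bounded by $C_M \to 0$ as $M\to\infty$ because $\mu_1/\mu_2 < 1$ dominates the polynomial factor $\alpha_i$. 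This bound is uniform in $g$ with $\|g\|_{\mu_1}\le 1$, so $P_M \partial_i \to \partial_i$ in operator norm and $\partial_i$ is compact.

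The main obstacle I expect is precisely this multivariate tail estimate: one must verify that the geometric decay $(\mu_1/\mu_2)^{|\alpha|}$ really does beat the polynomial growth in $\alpha_i$ uniformly in $M$, and one must be careful with the combinatorial factor $(\alpha - e_i)!/\alpha!$ when converting between the two norms. Once that estimate is in hand, the rest of the argument is essentially bookkeeping.

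To finish, since each $f_i$ is a polynomial, the lemma provides a constant $C_i$ with $\|M_{f_i}\|_{F^2_{\mu_2}}^{F^2_{\mu_3}} \le C_i$. Compactness of $\partial_i$ from $F^2_{\mu_1}$ to $F^2_{\mu_2}$ combined with boundedness of $M_{f_i}$ from $F^2_{\mu_2}$ to $F^2_{\mu_3}$ yields compactness of $M_{f_i}\partial_i : F^2_{\mu_1} \to F^2_{\mu_3}$, and
\[ A_f = \sum_{i=1}^n M_{f_i} \partial_i \]
is then compact as a finite sum of compact operators, completing the proof.
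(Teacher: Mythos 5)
Your proposal is correct and follows essentially the same route as the paper: decompose $A_f = \sum_{i} M_{f_i}\partial_i$, factor each summand through an intermediate space $F^2_{\mu_2}$ with $\mu_1 < \mu_2 < \mu_3$, and combine compactness of $\partial_i : F^2_{\mu_1}\to F^2_{\mu_2}$ with boundedness of $M_{f_i}: F^2_{\mu_2}\to F^2_{\mu_3}$ from the lemma. The only difference is that you actually carry out the multivariate tail estimate for the compactness of $\partial_i$ (your factor $\alpha_i\,\mu_1(\mu_1/\mu_2)^{|\alpha|-1}$ is the right one, since $\alpha_i^2(\alpha-e_i)! = \alpha_i\,\alpha!$), whereas the paper merely asserts this by analogy with its one-variable argument.
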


\begin{proof}
Let $f = (f_1, f_2, \ldots, f_n)^T$, and select $\mu_2$ such that $\mu_1 < \mu_2 < \mu_3$. For each $i = 1,\ldots, n$, the operator of partial differentiation $\frac{\partial}{\partial x_i} : F_{\mu_1}^2(\mathbb{R}^n) \to F_{\mu_2}^2(\mathbb{R}^n)$ is a compact operator, and the multiplication operator $M_{f_i} : F_{\mu_2}^2(\mathbb{R}^n) \to F_{\mu_3}^2(\mathbb{R}^n)$ is bounded. Hence, the operator $M_{f_i} \frac{\partial}{\partial x_i}$ is compact. As $A_f = M_{f_1} \frac{\partial}{\partial x_1} + \cdots + M_{f_n} \frac{\partial}{\partial x_n}$, it follows that $A_f$ is a compact operator from $F_{\mu_1}^2(\mathbb{R}^n)$ to $F_{\mu_3}^2(\mathbb{R}^n)$.
\end{proof}

This section has thus established the existence of compact Liouville operators between various pairs of spaces. It is emphasized that these are not the only pairs for which a compact Liouville may be determined.

\section{Singular Dynamic Mode Decompositions for Compact Liouville Operators}

The objective of this section is to determine a decomposition of the full state observable, $g_{id}(x) := x$, with respect to an orthonormal basis obtained from a Liouville operator corresponding to a continuous time dynamical system $\dot x = f$. We will let $H$ and $\tilde H$ be two RKHSs over $\mathbb{R}^n$ such that the Liouville operator, $A_f g(x) = \nabla g(x) f(x)$ is compact as an operator from $H$ to $\tilde H$. 
To obtain an orthonormal basis, a singular value decomposition for the compact operator $A_f$ is obtained. Specifically, note that as $A_f$ is compact, so is $A_f^*$. Hence, $A_f^* A_f$ is diagonalizable as a self adjoint compact operator. Thus, there is a countable collection of non-negative eigenvalues $\sigma_m^2 \ge 0$ and eigenfunctions $\varphi_m$ corresponding to $A_f^* A_f$, such that $A_f^* A_f \varphi_m = \sigma_m^2 \varphi_m$. Since $A_f^* A_f$ is self adjoint, $\{ \varphi_m \}_{m=0}^\infty$ may be selected in such a way that they form an orthonormal basis of $H$. The functions $\varphi_m$ are the right singular vectors of $A_f$.

For $\sigma_m \neq 0$, the left singular vectors may be determined as $\psi_m := \frac{A_f \varphi_m}{\sigma_m},$ and the collection of nonzero $\psi_m$ form an orthonormal set in $\tilde H$. This may be seen via 
\begin{gather*}\langle \psi_m, \psi_{m'} \rangle_{\tilde H} = \frac{1}{\sigma_{m}\sigma_{m'}}\langle A_f \varphi_m, A_f \varphi_{m'} \rangle_{\tilde H}\\= \frac{1}{\sigma_{m}\sigma_{m'}} \langle \varphi_m, A_f^* A_f \varphi_{m'} \rangle_{H} = \frac{\sigma_{m'}^2}{\sigma_{m}\sigma_{m'}} \langle \varphi_m, \varphi_{m'} \rangle= \frac{\sigma_{m'}^2}{\sigma_{m}\sigma_{m'}} \delta_{m,m'},\end{gather*} where $\delta_{\cdot,\cdot}$ is the Kronecker delta function.

Finally, \[A_f g = \sum_{\sigma_m \neq 0} \langle g, \varphi_m \rangle_H \sigma_m \psi_m\] for all $g \in H$, and \[A_f^* h = \sum_{\sigma_m \neq 0} \langle h, \psi_m \rangle_{\tilde H} \sigma_m \varphi_m.\]

To find a decomposition for the full state observable, $g_{id}$, first note that the full state observable is vector valued, whereas the Hilbert spaces consist of scalar valued functions. To ameliorate this discrepancy, we will work with the individual entries of the full state observable, namely the maps $x \mapsto (x)_i$, for $i=1,\ldots,n$, which are the mappings of $x$ to its individual components. When $(x)_i$ resides in the Hilbert space, such as with the space $F_\mu^2(\mathbb{R}^n)$, and $(x)_i$ may be directly expanded with respect to the right singular vectors of $A_f$. If $(x)_i$ is not in the space, as in the case with the Gaussian RBF, if the space is universal, then a suitable approximation may be determined over a fixed compact subset, and the approximation will be expanded instead. Performing the entry wise decomposition of the full state observable is equivalent to performing the decomposition over vector valued RKHSs with diagonal kernel operators, and replacing the gradient of $g$ with the matrix valued derivative.

Hence, for each $i=1,\ldots,n$, we have $(x)_i = \sum_{m=0}^\infty (\xi_m)_i \varphi_m(x)$, where $(\xi_m)_i = \langle (x)_i, \varphi_m \rangle_{H}$. The vectors $\xi_m$ are called the \emph{singular Liouville modes} of the dynamical system with respect to the pair of Hilbert space $H$ and $\tilde H$.

Note that for a trajectory of the system, given as $x(t)$, it can be seen that
\begin{gather*}
    \dot x(t) = f(x(t)) = \nabla g_{id}(x(t)) f(x(t)) = A_f g_{id}(x(t))\\
    = \sum_{m=0}^\infty \langle g_{id}, \varphi_{m} \rangle_{H} \sigma_m \psi_m(x(t)) = \sum_{m=0}^\infty \xi_m \sigma_m \psi_m(x(t)).
\end{gather*}

Hence, $x(t)$ satisfies a differential equation with respect to the left singular vectors of the Liouville operator and the singular Liouville modes. Given these quantities, reconstruction of $x(t)$ is possible using tools from the solution of initial value problems. In particular, the following form of the equation may be exploited:
\[ x(t) = x(0) + \sum_{m=0}^\infty \xi_m \sigma_m \int_0^t \psi_m(x(\tau)) d\tau. \]

\section{Recovering an Eigenfunction Approach in Special Cases}

While the majority of this mansucript is aimed at the singular Dynamic Mode Decomposition, where the domain and range are different for the compact Liouville operator, there is still a possibility of obtaining an eigendecomposition in special cases. In particular, for many of the examples shown above, the domain and range spaces have similar structure and the range space has less stringent requirement for the functions it contains. This means that the domain itself may be embedded in the range space, and if there is a complete set of eigenfunctions in this embedded space, then the operator may still be diagonalized.

Note that the operator is still mapping between two different Hilbert spaces, which means that the inner product on the embedding is different than the inner product on the domain. This difference will appear in the numerical methods given in subsequent sections.

The following is a well known result (cf. \cite{zhu2012analysis}), and is included here for illustration purposes.

\begin{proposition}
If $\mu_1 < \mu_2$, then $F^2_{\mu_1}(\mathbb{R}^n) \subset F^2_{\mu_2}(\mathbb{R}^n)$.
\end{proposition}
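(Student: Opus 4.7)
The plan is to directly compare the two norms term by term using the monomial expansion of elements of $F^2_{\mu_1}(\mathbb{R}^n)$. Since the paper has already established that the monomials $\{x^\alpha\}_{\alpha\in\mathbb{N}^n}$ form an orthogonal system in every $F^2_\mu(\mathbb{R}^n)$, and that the norm of $f(x) = \sum_\alpha a_\alpha x^\alpha$ is given by $\|f\|_\mu^2 = \sum_\alpha |a_\alpha|^2 \alpha!/\mu^{|\alpha|}$, the entire argument reduces to a scalar comparison of the weights $\alpha!/\mu_1^{|\alpha|}$ and $\alpha!/\mu_2^{|\alpha|}$.

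First, I would fix $f \in F^2_{\mu_1}(\mathbb{R}^n)$ and write $f(x) = \sum_\alpha a_\alpha x^\alpha$ with $\sum_\alpha |a_\alpha|^2 \alpha!/\mu_1^{|\alpha|} < \infty$. Second, I would factor the $\mu_2$-weights through the $\mu_1$-weights via
\[
\|f\|_{\mu_2}^2 \;=\; \sum_\alpha |a_\alpha|^2 \frac{\alpha!}{\mu_2^{|\alpha|}} \;=\; \sum_\alpha |a_\alpha|^2 \frac{\alpha!}{\mu_1^{|\alpha|}} \left(\frac{\mu_1}{\mu_2}\right)^{|\alpha|}.
\]
Third, I would observe that the hypothesis $\mu_1 < \mu_2$ gives $(\mu_1/\mu_2)^{|\alpha|} \le 1$ for every multi-index $\alpha$, so that $\|f\|_{\mu_2}^2 \le \|f\|_{\mu_1}^2 < \infty$. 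This shows both that $f \in F^2_{\mu_2}(\mathbb{R}^n)$ and that the inclusion map has operator norm at most $1$.

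There is essentially no obstacle here: the inequality $(\mu_1/\mu_2)^{|\alpha|}\le 1$ is the entire content of the proof. In fact, this is precisely the $M_1$ case already treated inside the proof of the preceding lemma, simply isolated as a self-contained statement. The only point requiring any care is keeping the multi-index conventions consistent with the definition of $F^2_\mu(\mathbb{R}^n)$ used earlier in the section, but that is purely notational bookkeeping and does not affect the argument.
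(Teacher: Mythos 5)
Your proof is correct and is essentially the same argument as the paper's: a term-by-term comparison of the norms via the factor $(\mu_1/\mu_2)^{|\alpha|}\le 1$, which the paper writes out in one variable and declares identical (modulo bookkeeping) in several variables, whereas you carry the multi-index version directly. You also correctly identify that this is just the $M_1$ case of the preceding multiplier lemma isolated as a standalone statement, which is exactly how the paper treats it.
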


\begin{proof}
Again this is shown for the single variable case, where the multivariate case follows by an identical argument, but with more bookkeeping.

Suppose that $g \in F^2_{\mu_1}(\mathbb{R})$ with $g(z) = \sum_{m=0}^\infty a_m z^m$. Then \begin{gather*}\|g\|_{F_{\mu_2}^2(\mathbb{R})}^2 = \sum_{m=0}^\infty |a_m|^2 \frac{m!}{\mu_2^m} = \sum_{m=0}^\infty |a_m|^2 \left( \frac{\mu_1}{\mu_2} \right)^m \frac{m!}{\mu_1^m} \le \sum_{m=0}^\infty |a_m|^2 \frac{m!}{\mu_1^m} = \|g\|_{F_{\mu_1}^2(\mathbb{R})}^2.\end{gather*}

Since the quantity on the right is bounded, so is the quantity on the left. Hence $g \in F_{\mu_2}^2(\mathbb{R}).$
\end{proof}

\begin{example}
A simple example demonstrating that an eigenbasis may be found between the two spaces arises in the study of $A_x:F_{\mu_1}^2(\mathbb{R}) \to F_{\mu_2}^2(\mathbb{R})$ for $\mu_1 < \mu_2$. Note that an eigenfunction, $\varphi$, for $A_z$ must reside in $F_{\mu_1}^2(\mathbb{R}) \cap F_{\mu_2}^2(\mathbb{R}) = F_{\mu_1}^2(\mathbb{R})$, and satisfy $\varphi'(x) x = \lambda \varphi(x)$. Consequently, takes the form $\varphi(x) = x^{\lambda}$, and is in $F_{\mu_1}^2(\mathbb{R})$ only for $\lambda = 0, 1, 2, \ldots$. Hence, the eigenfuncitons of $A_x$ are the monomials. Monomials are contained in $F_{\mu_1}^2(\mathbb{R})$ and form a complete eigenbasis for both spaces. Note that the norm of $x^m$ is $\sqrt{\frac{m!}{\mu_1^m}}$ in $F_{\mu_{1}}^2(\mathbb{R})$ and $\sqrt{\frac{m!}{\mu_2^m}}$ in $F_{\mu_{2}}^2(\mathbb{R}).$
\end{example}

The following proposition is obtained in the same manner as in the classical case.

\begin{proposition}
Suppose that $H$ and $\tilde H$ are two RKHSs over $\mathbb{R}^n$, and that $H \subset \tilde H$. If $\varphi \in H$ is an eigenfunction for $A_f$ as $A_{f} \phi = \lambda \phi$, then given a trajectory $x:[0,T] \to \mathbb{R}^n$ satisfying $\dot x = f(x)$ the following holds $\varphi(x(t)) = e^{\lambda t} \varphi(x(0)).$
\end{proposition}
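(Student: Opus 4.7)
The plan is to reduce the statement to a one-dimensional linear ODE by composing $\varphi$ with the trajectory and differentiating. Define $h(t) := \varphi(x(t))$ on $[0,T]$. The key observation is that, by the chain rule,
\[
\frac{d}{dt} h(t) = \nabla \varphi(x(t)) \cdot \dot x(t) = \nabla \varphi(x(t)) \cdot f(x(t)) = (A_f \varphi)(x(t)),
\]
so once we can substitute the eigenvalue relation pointwise along the trajectory we obtain $h'(t) = \lambda h(t)$, and the conclusion $h(t) = e^{\lambda t} h(0)$ follows from the standard scalar ODE.

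First I would address the justification of the chain rule: since $\varphi \in H$ and $H$ is an RKHS of (at least $C^1$) functions on $\mathbb{R}^n$, $\nabla \varphi$ exists, and the trajectory $x(\cdot)$ is differentiable because it satisfies $\dot x = f(x)$. Thus $h$ is differentiable on $[0,T]$ and the above identity is legitimate.

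Next I would handle what is actually the only subtle point: the eigenvalue equation $A_f \varphi = \lambda \varphi$ holds as an equality in $\tilde H$, whereas the chain rule produces the value $(A_f \varphi)(x(t))$ at a specific point. Because $H \subset \tilde H$ and both are reproducing kernel Hilbert spaces of honest functions on $\mathbb{R}^n$, equality of two elements of $\tilde H$ is equality pointwise: for every $y \in \mathbb{R}^n$, $(A_f \varphi)(y) = \langle A_f \varphi, K^{\tilde H}_y \rangle_{\tilde H} = \langle \lambda \varphi, K^{\tilde H}_y \rangle_{\tilde H} = \lambda \varphi(y)$. Applying this at $y = x(t)$ yields $(A_f \varphi)(x(t)) = \lambda \varphi(x(t)) = \lambda h(t)$.

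Finally, combining the two steps gives the linear scalar ODE $h'(t) = \lambda h(t)$ with initial condition $h(0) = \varphi(x(0))$, whose unique solution is $h(t) = e^{\lambda t} \varphi(x(0))$, i.e., $\varphi(x(t)) = e^{\lambda t} \varphi(x(0))$. The only step that requires any care is the pointwise evaluation of the eigenvalue relation along the trajectory; once RKHS pointwise evaluation is invoked, the rest is routine.
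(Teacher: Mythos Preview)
Your proof is correct and follows essentially the same route as the paper's: chain rule along the trajectory, substitute the eigenvalue relation, solve the resulting scalar linear ODE. You actually add a detail the paper skips, namely that equality in $\tilde H$ yields pointwise equality via the reproducing property, so your argument is slightly more careful but not genuinely different.
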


\begin{proof}
Since $A_{f}\varphi = \nabla \varphi f$, it follows that \[\frac{d}{dt} \varphi(x(t)) = \nabla \varphi(x(t)) \dot x(t) = \nabla \varphi(x(t)) f(x(t)) = A_f \varphi(x(t)) = \lambda \varphi(x(t)).\] That is, $\frac{d}{dt} \varphi(x(t)) = \lambda \varphi(x(t)).$ Thus, the conclusion follows.
\end{proof}

Suppose that $A_f : H \to \tilde H$ has a complete eigenbasis in the sense that the span of the eigenfunctions, $\{ \varphi_m \}_{m = 1}^\infty$, are dense in $H$. Then the full state observable, $g_{id}$, is the full state observable, then each entry of $g_{id}$, $(x)_i$ for $i=1,\ldots,n$, may be expressed as
\[ (x)_i = \lim_{M\to\infty} \sum_{m=1}^M (\xi_{m,M})_i \varphi_m(x),\]
where $(\xi_{m,M})_i$ is the $m$-th coefficient obtained from projecting $(x)_i$ onto the span of the first $M$ eigenfunctions. If the eigenfunctions are orthogonal, then the dependence on $M$ may be removed from $\xi_{m,M}$. Hence, the full state observable is obtained from
\begin{equation}\label{eq:eigendecomp} g_{id}(x) = \lim_{M\to\infty} \sum_{m=1}^M \xi_{m,M} \varphi_m(x),\end{equation}
with $\xi_{m,M}$ being the vector obtained by stacking $(\xi_{m,M})_i$. Finally, by substituting $x(t)$ into this representation (where $\dot x = f(x)$), the following holds
\begin{equation}\label{eq:dmdrepresentation} x(t) = g_{id}(x(t)) = \lim_{M\to\infty} \sum_{m=1}^M \xi_{m,M} e^{\lambda t} \varphi_m(x(0)).\end{equation}

Hence, this methodology yields a DMD routine, where the finite rank representations will converge to the compact Liouville operators, following the proof given in the Appendix of \cite{rosenfeld2019dynamic}.

\section{Singular Dynamic Mode Decomposition Algorithm}
This section is aimed at determining a convergent algorithm that can determine approximations of the singular Liouville modes and the singular vectors of $A_f$. While an eigenfunction expansion is still possible in the case of nested spaces, the Singular Dynamic Mode Decomposition algorithm is technically more general. Moreover, the SVD ensures the existence of dynamic modes, which may not be well defined fixed concepts for the eigenfunction case.

From the data perspective, a collection of trajectories, $\{ \gamma_j : [0,T_j] \to \mathbb{R}^n \}_{j=1}^M$, corresponding to an unknown dynamical system, $f:\mathbb{R}^n \to \mathbb{R}^n$, as $\dot \gamma_j = f(\gamma_j)$ have been observed. The objective of DMD is to get an approximation of the dynamic modes of the system, and to obtain an approximate reconstruction of a given trajectory. Once a reconstruction is determined, then data driven predictions concerning future states of the trajectory may be determined. A DMD routine is somewhat like a Fourier series representation, which can reproduce a continuous trajectory exactly, however DMD methods exploit a trajectory's underlying dynamic structure.

This routine effectively interpolates the action of the Liouville operator on a collection of basis functions. When these basis functions form a complete set within the Hilbert space, which can be achieved by selecting a dense collection of short trajectories throughout the workspace, then a sequence of finite rank approximations determined by this routine converges to the compact Liouville operator in norm. Which means that the left and right singular functions of the finite rank operators in the sequence converge to those of the Liouville operator, and that the singular values converge as well.

DMD routines involving the Koopman operator add the additional requirement of forward invariance for the sake of discretizations. This method as well as that of \cite{rosenfeld2019dynamic} sidestep that requirement by accessing the Liouville operators directly through their connection with the occupation kernels of the RKHSs. To wit, given two RKHSs of continuously differentiable functions, $H$ and $\tilde H$, with kernels $K$ and $\tilde K$ respectively, and a compact Liouville operator, $A_f : H \to \tilde H$, the occupation kernel, $\Gamma_{\gamma_{j}} \in \tilde H$ corresponding to the trajectory $\gamma_j$ satisfies $A_f^* \Gamma_{\gamma_j} = K(\cdot,\gamma_j(T_j)) - K(\cdot,\gamma_j(0)),$ where $K$ is the kernel function for the space $H$. In particular, given $g \in H$,
\begin{gather*}
\langle A_f g, \Gamma_{\gamma_j} \rangle_{\tilde H} = \int_0^{T_j} \nabla g(\gamma_j(t)) f(\gamma_j(t)) dt\\ = \int_0^{T_j} \dot g(\gamma_j(t)) dt = g(\gamma_j(T_j)) - g(\gamma_j(0)) = \langle g, K_{\gamma_j(T)} - K_{\gamma_j(0)} \rangle_{H}.
\end{gather*}

The objective is to construct a finite rank approximation of $A_f$ through which an SVD may be performed to find approximate singular values and singular vectors, and to ultimately approximate the singular Liouville modes. Note that since the dynamics are unknown, the adjoint must be approximated instead, where the action of the adjoint on the occupation kernels provides a sample of the operator. Thus, the finite rank representation will be determined through the restriction of $H$ to the span of the ordered basis $\alpha = \{ \Gamma_{\gamma_j} \}_{j=1}^M$. A corresponding basis for $H$ must also be selected, and given the available information, $\beta = \{ K(\cdot,\gamma_j(T_j)) - K(\cdot,\gamma_j(0))\}$ is most reasonable. Of course, this leads to a rather benign matrix representation of
\[ [A_f^*]_{\alpha}^\beta = \begin{pmatrix} 1 & & \\ & \ddots & \\ & & 1\end{pmatrix}.\]
Moreover, if this matrix is input into an SVD routine, typical algorithms would not be aware of the non-orthogonal inner products between the basis elements. To rectify this, two orthonormal bases $\alpha'$ and $\beta'$ may be obtained from an eigendecomposition of the Gram matrices (which are assumed to be strictly positive definite) for $\alpha$ and $\beta$ respectively:
\begin{gather*}
\tilde G := \begin{pmatrix} \langle \Gamma_{\gamma_1}, \Gamma_{\gamma_1} \rangle_{\tilde H} & \cdots & \langle \Gamma_{\gamma_1}, \Gamma_{\gamma_M} \rangle_{\tilde H}\\
\vdots & \ddots & \vdots\\
\langle \Gamma_{\gamma_M}, \Gamma_{\gamma_1} \rangle_{\tilde H} & \cdots & \langle \Gamma_{\gamma_M}, \Gamma_{\gamma_M} \rangle_{\tilde H}
\end{pmatrix} = V \Lambda V^*:=\\
\begin{pmatrix} | & & |\\
\tilde v_1 & \cdots & \tilde v_M\\
| & & |
\end{pmatrix}
\begin{pmatrix} \tilde \lambda_1 & & \\
 & \cdots & \\
 & & \tilde \lambda_M
\end{pmatrix}
\begin{pmatrix} - &\tilde v_1^* & -\\
 & \vdots & \\
 - & \tilde v_M^* & -
\end{pmatrix},\text{ and}
\\
G = \begin{pmatrix} \langle A_f^* \Gamma_{\gamma_1}, A_f^* \Gamma_{\gamma_1} \rangle_{ H} & \cdots & \langle A_f^* \Gamma_{\gamma_1}, A_f^* \Gamma_{\gamma_M} \rangle_{ H}\\
\vdots & \ddots & \vdots\\
\langle A_f^* \Gamma_{\gamma_M},A_f^*  \Gamma_{\gamma_1} \rangle_{ H} & \cdots & \langle A_f^* \Gamma_{\gamma_M}, A_f^* \Gamma_{\gamma_M} \rangle_{ H}\\
\end{pmatrix} = \tilde V \tilde \Lambda \tilde V^*:=\\\\ =
\begin{pmatrix} | & & |\\
 v_1 & \cdots &  v_M\\
| & & |
\end{pmatrix}
\begin{pmatrix}  \lambda_1 & & \\
 & \cdots & \\
 & &  \lambda_M
\end{pmatrix}
\begin{pmatrix} - & v_1^* & -\\
 & \vdots & \\
 - &  v_M^* & -
\end{pmatrix}.
\end{gather*}
A more meaningful representation of $A_f^*$ may be found by re-expressing $[A_f^*]_{\alpha}^{\beta}$ in terms of the orthornormal sets $\alpha' = \{ q_j \}_{j=1}^M$ and $\beta' = \{ p_j \}_{j=1}^M$ where
\begin{gather*} p_j = \frac{1}{\sqrt{v_j^* G v_j}} \sum_{\ell = 1}^M (v_j)_{\ell} ( K(\cdot,\gamma_\ell(T_\ell)) - K(\cdot,\gamma_\ell(0))), \text{ and}\\
q_j = \frac{1}{\sqrt{\tilde v_j^* \tilde G \tilde v_j}} \sum_{\ell=1}^M (\tilde v_j)_{\ell} \Gamma_{\gamma_\ell}.\end{gather*}
In other words,
\[
\begin{pmatrix}
q_1(x) \\ \vdots \\ q_M(x)
\end{pmatrix}
=
\begin{pmatrix}
\left(\sqrt{\tilde v_1^* \tilde G \tilde v_1}\right)^{-1} & &\\
& \ddots &\\
& & \left(\sqrt{\tilde v_M^* \tilde G \tilde v_M}\right)^{-1}
\end{pmatrix}
\tilde V^T
\begin{pmatrix}
\Gamma_{\gamma_1}(x)\\
\vdots\\
\Gamma_{\gamma_M}(x)
\end{pmatrix},
\]
and a similar expression may be written for $p_j$. Write
\begin{gather*}\tilde V_0 = \tilde V \diag\left(\left(\sqrt{\tilde v_1^* \tilde G \tilde v_1}\right)^{-1}, \ldots, \left(\sqrt{\tilde v_M^* \tilde G \tilde v_M}\right)^{-1}\right),\text{ and}\\
V_0 = V \diag\left(\left(\sqrt{ v_1^*  G  v_1}\right)^{-1}, \ldots, \left(\sqrt{ v_M^*  G  v_M}\right)^{-1}\right)\end{gather*}
the coefficients of each column of $V_0$ and $\tilde V_0$ correspond to functions of norm 1 for their respective spaces. It follows that
\begin{gather*}
    [A_f^*]_{\alpha'}^{\beta'} = V_0^{-1} [A_f^*]_{\alpha}^\beta \tilde V_0 = V_0^{-1} \tilde V_0.
\end{gather*}
That is, the matrix representation with respect to the bases $\beta'$ and $\alpha'$ are obtained by sending elements of $\alpha'$ to $\alpha$, computing the action of $[A_f^*]_{\alpha}^\beta$ on this transformation, and then sending the result expressed in terms of the $\beta$ basis to $\beta'$.

Now the approximate singular vectors may be obtained for $A_f$ by taking the SVD of $[A_f^*]_{\alpha'}^{\beta'}$. In particular, the right singular vectors of $[A_f^*]_{\alpha'}^{\beta'}$ will be correspond to the approximate left singular functions of $A_f$ and vice versa. In particular, writing the SVD of $[A_f^*]_{\alpha'}^{\beta'}$ as
\begin{gather*}
    [A_f^*]_{\alpha'}^{\beta'} = \hat U \hat \Sigma \hat V^* =
    \begin{pmatrix} | & & |\\
        \hat u_1 & \cdots & \hat u_M\\
        | & & |
    \end{pmatrix}
    \begin{pmatrix}
        \hat \sigma^2_1 & & \\
        & \ddots &\\
        & & \hat \sigma^2_M
    \end{pmatrix}
    \begin{pmatrix}
        - & \hat v_1^* & -\\
        & \vdots & \\
        - & \hat v_M^* & -
    \end{pmatrix},
\end{gather*}
and the approximate right singular vector for $A_f$ is $\hat \varphi_j = \frac{1}{\sqrt{u_j^* G_p u_j}}\sum_{\ell} (\hat u_j )_\ell p_\ell$, and the approximate left singular vector for $A_f$ is $\hat \psi_j = \frac{1}{\sqrt{\hat v_j^* G_q \hat v_j}}\sum_{\ell} (\hat v_j )_\ell q_\ell,$ where $G_p$ and $G_q$ are the Gram matrices for the ordered bases $\beta'$ and $\alpha'$ respectively.

Translating this to the original bases $\alpha$ and $\beta$, we find the following:
\begin{gather*}
    \hat \varphi_j = \frac{1}{\sqrt{\hat u_j^* G_p \hat u_j}} u_j^T V_0^T \begin{pmatrix} K(\cdot, \gamma_1(T_1)) - K(\cdot,\gamma_1(0))\\ \vdots \\ K(\cdot,\gamma_M(T_M)) - K(\cdot,\gamma_M(0))
    \end{pmatrix},\text{ and}\\
    \hat \psi_j = \frac{1}{\sqrt{\hat v_j^* G_q \hat v_j}} \hat v_j^T \tilde V_0^T \begin{pmatrix} \Gamma_{\gamma_1}\\ \vdots \\ \Gamma_{\gamma_M}
    \end{pmatrix}.
\end{gather*}

Thus, if $x : [0,T] \to \mathbb{R}^n$ satisfies $\dot x = f(x)$, then it may be approximately expressed through the following integral equation:
\begin{gather*}
    x(t) \approx x(0) + \int_0^t \sum_{j=1}^M \hat \xi_j \hat \psi_j(x(\tau)) d\tau,
\end{gather*}
where
\begin{gather*}\hat \xi_j = \begin{pmatrix} \langle (x)_1,\hat \varphi_j \rangle_H\\ \vdots\\ \langle (x)_n, \hat \varphi_j \rangle_H\end{pmatrix}, \text{ and}\\
\langle (x)_i, \hat \phi_j \rangle_H = \diag\left(\frac{1}{\sqrt{\hat u_1^* G_p \hat u_1}}, \cdots, \frac{1}{\sqrt{\hat u_M^* G_p \hat u_M}}\right)\\ \times
\begin{pmatrix} - & \hat u_1^T & -\\
& \vdots & \\
 - & \hat u_M^T & -\end{pmatrix}
V_0^T
\begin{pmatrix} (\gamma_1(T_1))_i - (\gamma_1(0))_i\\ \vdots \\ (\gamma_M(T_M))_i - (\gamma_M(0))_i
    \end{pmatrix}.
    \end{gather*}

\section{The Eigenfunction based DMD Algorithm}\label{sec:eigenfunctiondmd}
In this section it will be assumed that $A_f : H \to \tilde H$ is a compact operator, and that $H \subset \tilde H$. Since $A_f$ is compact, it is bounded, which means that unlike \cite{gonzalez2021anti} and \cite{rosenfeld2019dynamic}, no additional assumptions are needed concerning the domain of this operator.

For a collection of observed trajectories, $\{ \gamma_1,\ldots,\gamma_M\}$ consider the collection of occupation kernels, $\alpha = \{\Gamma_{\gamma_1}, \ldots, \Gamma_{\gamma_M} \}_{m=1}^M$, where these are the occupation kernels for the space $H$, and let $\beta = \{ \tilde \Gamma_{\gamma_1}, \ldots, \tilde \Gamma_{\gamma_M}\}$ be the occupation kernels in $\tilde H$. Let $P_\alpha$ be the projection from $H$ to $H$ onto the span of $\alpha$, and let $\tilde P_{\alpha}$ and $\tilde P_{\beta}$ be the corresponding projections onto the spans of $\alpha$ and $\beta$ respectively (viewed as subspaces of $\tilde H$). The numerical method presented in this section will construct a matrix representation for the operator $\tilde P_{\alpha} \tilde P_{\beta} A_f P_{\alpha}$, where the matrix, $[\tilde P_{\alpha}\tilde P_{\beta}  A_f P_{\alpha}]_{\alpha}^\alpha$, represents this operator on the span of $\alpha$ in the domain and range respectively. Note that since the matrix representation is defined over $\alpha$, $[\tilde P_{\alpha} \tilde P_{\beta} A_f P_{\alpha}]_{\alpha}^\alpha = [\tilde P_{\alpha}\tilde P_{\beta}  A_f]_{\alpha}^\alpha.$

Recall that for a function $g \in \tilde H$, $\tilde P_{\beta} g$ is a linear combination of the functions of $\alpha$ as $\sum_{m=1}^M w_m \Gamma_{\gamma_m}$, where the weights are obtained via
\[
\begin{pmatrix}
\langle \tilde \Gamma_{\gamma_1},\tilde \Gamma_{\gamma_1} \rangle_{\tilde H} & \cdots & \langle \tilde \Gamma_{\gamma_1},\tilde \Gamma_{\gamma_M} \rangle_{\tilde H}\\
\vdots & \ddots & \vdots\\
\langle \tilde \Gamma_{\gamma_M},\tilde \Gamma_{\gamma_1} \rangle_{\tilde H} & \cdots & \langle \tilde \Gamma_{\gamma_M},\tilde \Gamma_{\gamma_M} \rangle_{\tilde H}
\end{pmatrix}
\begin{pmatrix}w_1 \\ \vdots \\ w_M \end{pmatrix}
=
\begin{pmatrix}\langle g, \tilde \Gamma_{\gamma_1} \rangle_{\tilde H} \\ \vdots \\ \langle g, \tilde \Gamma_{\gamma_M} \rangle_{\tilde H} \end{pmatrix},
\]
and the matrix is called the Gram matrix for the basis $\alpha$ in the space $\tilde H$.

Hence, for each $\Gamma_{\gamma_j}$, the weights for the projection of $A_f \Gamma_{\gamma_j}$ onto the span of $\beta$ may be obtained as
\begin{gather*}
\begin{pmatrix}
\langle \tilde \Gamma_{\gamma_1},\tilde \Gamma_{\gamma_1} \rangle_{\tilde H} & \cdots & \langle \tilde \Gamma_{\gamma_1},\tilde \Gamma_{\gamma_M} \rangle_{\tilde H}\\
\vdots & \ddots & \vdots\\
\langle \tilde \Gamma_{\gamma_M},\tilde \Gamma_{\gamma_1} \rangle_{\tilde H} & \cdots & \langle \tilde \Gamma_{\gamma_M},\tilde \Gamma_{\gamma_M} \rangle_{\tilde H}
\end{pmatrix}
\begin{pmatrix}w_1 \\ \vdots \\ w_M \end{pmatrix}\\
=
\begin{pmatrix}\langle A_f \Gamma_{\gamma_j}, \tilde \Gamma_{\gamma_1} \rangle_{\tilde H} \\ \vdots \\ \langle A_f \Gamma_{\gamma_j}, \tilde \Gamma_{\gamma_M} \rangle_{\tilde H} \end{pmatrix}
=
\begin{pmatrix}\langle \Gamma_{\gamma_j}, A_f^* \tilde \Gamma_{\gamma_1} \rangle_{H} \\ \vdots \\ \langle \Gamma_{\gamma_j}, A_f^* \tilde \Gamma_{\gamma_M} \rangle_{H} \end{pmatrix}\\
=
\begin{pmatrix}\langle \Gamma_{\gamma_j}, K_{\gamma_1(T_1)} - K_{\gamma_1(0)} \rangle_{H} \\ \vdots \\ \langle \Gamma_{\gamma_j},  K_{\gamma_M(T_M)} - K_{\gamma_M(0)} \rangle_{H} \end{pmatrix}
= 
\begin{pmatrix}\Gamma_{\gamma_j}(\gamma_1(T_1)) - \Gamma_{\gamma_j}(\gamma_1(0)) \\ \vdots \\ \Gamma_{\gamma_j}(\gamma_M(T_M)) - \Gamma_{\gamma_j}(\gamma_M(0)) \end{pmatrix}.
\end{gather*}
Next, a projection onto the span of $\alpha$ within $\tilde H$ must be performed. For each $\tilde \Gamma_{\gamma_j}$, the weights corresponding to its projection onto $\alpha$ are given via
\begin{gather*}
\begin{pmatrix}
\langle \Gamma_{\gamma_1}, \Gamma_{\gamma_1} \rangle_{\tilde H} & \cdots & \langle \Gamma_{\gamma_1}, \Gamma_{\gamma_M} \rangle_{\tilde H}\\
\vdots & \ddots & \vdots\\
\langle \Gamma_{\gamma_M}, \Gamma_{\gamma_1} \rangle_{\tilde H} & \cdots & \langle \Gamma_{\gamma_M}, \Gamma_{\gamma_M} \rangle_{\tilde H}
\end{pmatrix}
\begin{pmatrix}v_{1,j} \\ \vdots \\ v_{M,j}\end{pmatrix}
=
\begin{pmatrix}\langle \tilde \Gamma_{\gamma_j}, \Gamma_{\gamma_1} \rangle_{\tilde H} \\ \vdots \\ \langle \tilde  \Gamma_{\gamma_j}, \Gamma_{\gamma_M} \rangle_{\tilde H} \end{pmatrix}
\end{gather*}

Hence, the projection of $A_{f} \Gamma_{\gamma_j}$ is given as
\begin{gather*}
\tilde P_{\alpha} \tilde P_{\beta} A_f \Gamma_{\gamma_j} = \sum_{m=1}^M w_{m} \sum_{\ell=1}^M v_{\ell,m} \Gamma_{\gamma_{\ell}} =\\
\sum_{m=1}^M w_{m} \sum_{\ell=1}^M\left(
\begin{pmatrix}
\langle \Gamma_{\gamma_1}, \Gamma_{\gamma_1} \rangle_{\tilde H} & \cdots & \langle \Gamma_{\gamma_1}, \Gamma_{\gamma_M} \rangle_{\tilde H}\\
\vdots & \ddots & \vdots\\
\langle \Gamma_{\gamma_M}, \Gamma_{\gamma_1} \rangle_{\tilde H} & \cdots & \langle \Gamma_{\gamma_M}, \Gamma_{\gamma_M} \rangle_{\tilde H}
\end{pmatrix}^{-1}
\begin{pmatrix}\langle \tilde \Gamma_{\gamma_m}, \Gamma_{\gamma_1} \rangle_{\tilde H} \\ \vdots \\ \langle \tilde  \Gamma_{\gamma_m}, \Gamma_{\gamma_M} \rangle_{\tilde H} \end{pmatrix}
\right)^T
\begin{pmatrix}\Gamma_{\gamma_1} \\ \vdots \\ \Gamma_{\gamma_M} \end{pmatrix}\\
=
\left(
\begin{pmatrix}
\langle \tilde \Gamma_{\gamma_1},\tilde \Gamma_{\gamma_1} \rangle_{\tilde H} & \cdots & \langle \tilde \Gamma_{\gamma_1},\tilde \Gamma_{\gamma_M} \rangle_{\tilde H}\\
\vdots & \ddots & \vdots\\
\langle \tilde \Gamma_{\gamma_M},\tilde \Gamma_{\gamma_1} \rangle_{\tilde H} & \cdots & \langle \tilde \Gamma_{\gamma_M},\tilde \Gamma_{\gamma_M} \rangle_{\tilde H}
\end{pmatrix}^{-1}
\begin{pmatrix}\Gamma_{\gamma_j}(\gamma_1(T_1)) - \Gamma_{\gamma_j}(\gamma_1(0)) \\ \vdots \\ \Gamma_{\gamma_j}(\gamma_M(T_M)) - \Gamma_{\gamma_j}(\gamma_M(0)) \end{pmatrix}
\right)^T
\\
\times
\left(
\begin{pmatrix}
\langle \Gamma_{\gamma_1}, \Gamma_{\gamma_1} \rangle_{\tilde H} & \cdots & \langle \Gamma_{\gamma_1}, \Gamma_{\gamma_M} \rangle_{\tilde H}\\
\vdots & \ddots & \vdots\\
\langle \Gamma_{\gamma_M}, \Gamma_{\gamma_1} \rangle_{\tilde H} & \cdots & \langle \Gamma_{\gamma_M}, \Gamma_{\gamma_M} \rangle_{\tilde H}
\end{pmatrix}^{-1}
\begin{pmatrix}\langle \tilde \Gamma_{\gamma_1}, \Gamma_{\gamma_1} \rangle_{\tilde H} & \cdots & \langle \tilde \Gamma_{\gamma_M}, \Gamma_{\gamma_1} \rangle_{\tilde H}\\
\vdots \\
\langle \tilde \Gamma_{\gamma_1}, \Gamma_{\gamma_M} \rangle_{\tilde H} & \cdots & \langle \tilde \Gamma_{\gamma_M}, \Gamma_{\gamma_M} \rangle_{\tilde H}\end{pmatrix}
\right)^T
\begin{pmatrix}\Gamma_{\gamma_1} \\ \vdots \\ \Gamma_{\gamma_M} \end{pmatrix},
\end{gather*}
and the final representation, $[\tilde P_{\alpha} \tilde P_{\beta} A_f ]_{\alpha}^\alpha$ is given as
\begin{gather}\label{eq:finiterankrep}
    [\tilde P_{\alpha} \tilde P_{\beta} A_f ]_{\alpha}^\alpha =
\\ \nonumber
    \begin{pmatrix}
        \langle \Gamma_{\gamma_1}, \Gamma_{\gamma_1} \rangle_{\tilde H} & \cdots & \langle \Gamma_{\gamma_1}, \Gamma_{\gamma_M} \rangle_{\tilde H}\\
        \vdots & \ddots & \vdots\\
        \langle \Gamma_{\gamma_M}, \Gamma_{\gamma_1} \rangle_{\tilde H} & \cdots & \langle \Gamma_{\gamma_M}, \Gamma_{\gamma_M} \rangle_{\tilde H}
    \end{pmatrix}^{-1}
    \begin{pmatrix}
        \langle \tilde \Gamma_{\gamma_1}, \Gamma_{\gamma_1} \rangle_{\tilde H} & \cdots & \langle \tilde \Gamma_{\gamma_M}, \Gamma_{\gamma_1} \rangle_{\tilde H}\\
        \vdots \\
        \langle \tilde \Gamma_{\gamma_1}, \Gamma_{\gamma_M} \rangle_{\tilde H} & \cdots & \langle \tilde \Gamma_{\gamma_M}, \Gamma_{\gamma_M} \rangle_{\tilde H}
    \end{pmatrix}
\\  \nonumber
    \times
    \begin{pmatrix}
        \langle \tilde \Gamma_{\gamma_1},\tilde \Gamma_{\gamma_1} \rangle_{\tilde H} & \cdots & \langle \tilde \Gamma_{\gamma_1},\tilde \Gamma_{\gamma_M} \rangle_{\tilde H}\\
        \vdots & \ddots & \vdots\\
        \langle \tilde \Gamma_{\gamma_M},\tilde \Gamma_{\gamma_1} \rangle_{\tilde H} & \cdots & \langle \tilde \Gamma_{\gamma_M},\tilde \Gamma_{\gamma_M} \rangle_{\tilde H}
    \end{pmatrix}^{-1}
\\  \nonumber
    \times
    \begin{pmatrix}
        \Gamma_{\gamma_1}(\gamma_1(T_1)) - \Gamma_{\gamma_1}(\gamma_1(0)) & \cdots & \Gamma_{\gamma_M}(\gamma_1(T_1)) - \Gamma_{\gamma_M}(\gamma_1(0))
        \\ \vdots \\ 
        \Gamma_{\gamma_1}(\gamma_M(T_M)) - \Gamma_{\gamma_1}(\gamma_M(0)) & \cdots & \Gamma_{\gamma_M}(\gamma_M(T_M)) - \Gamma_{\gamma_M}(\gamma_M(0))
    \end{pmatrix}.
\end{gather}
Note that when $H = \tilde H$ and the occupation kernels are assumed to be in the domain of the Liouville operator, the first two matrices cancel, and the representation reduces to that of \cite{rosenfeld2019dynamic}.

Under the assumption of diagonalizability for \eqref{eq:finiterankrep}, which holds for almost all matrices, an eigendecomposition for \eqref{eq:finiterankrep} may be determined as
\[
[\tilde P_{\alpha} \tilde P_{\beta} A_f ]_{\alpha}^\alpha =
\begin{pmatrix}
| & & |\\
V_1 & \cdots & V_M\\
| & & |
\end{pmatrix}
\begin{pmatrix}
\lambda_1 & & \\
& \ddots & \\
& & \lambda_M
\end{pmatrix}
\begin{pmatrix}
| & & |\\
V_1 & \cdots & V_M\\
| & & |
\end{pmatrix}^{-1},
\]
where each column, $V_j$, is an eigenvector of $[\tilde P_{\alpha} \tilde P_{\beta} A_f ]_{\alpha}^\alpha$ with eigenvalue $\lambda_j$. The corresponding normalized eigenfunction is given as
\[ \hat \varphi_j(x) = \frac{1}{\sqrt{V_j^T G_{\alpha} V_j}} V_j^T \begin{pmatrix} \Gamma_{\gamma_1} \\ \vdots \\ \Gamma_{\gamma_M}\end{pmatrix},\]
where the normalization is performed in the Hilbert space $H$ through the Gram matrix for $\alpha$, $G_\alpha$, according to $H$'s inner product. Set $\bar V_j := \frac{1}{\sqrt{V_j^T G_{\alpha} V_j}} V_j$, and let $\bar V := \left( V_1 \cdots V_M \right)$.

The Gram matrix for the normalized eigenbasis may be quickly computed as $\bar V^T G_{\alpha} \bar V$, and the weights for the projection of the full state observable onto this eigenbasis may be written as
\begin{gather*}
\begin{pmatrix}
- & \hat \xi_1^T & -\\
& \vdots & \\
- & \hat \xi_M^T & -
\end{pmatrix}
=
(\bar V^T G_{\alpha} \bar V)^{-1}
\begin{pmatrix}
    \langle (x)_1, \hat \varphi_1 \rangle_H & \cdots & \langle(x)_n, \hat \varphi_1 \rangle_H\\
    \vdots & \ddots & \vdots\\
    \langle (x)_1, \hat \varphi_M \rangle_H & \cdots & \langle(x)_n, \hat \varphi_M \rangle_H
\end{pmatrix}
\\
=
(\bar V^T G_{\alpha} \bar V)^{-1} \bar V^T
\begin{pmatrix}
    \langle (x)_1, \Gamma_{\gamma_1} \rangle_H & \cdots & \langle(x)_n, \Gamma_{\gamma_1} \rangle_H\\
    \vdots & \ddots & \vdots\\
    \langle (x)_1, \Gamma_{\gamma_M} \rangle_H & \cdots & \langle(x)_n, \Gamma_{\gamma_M} \rangle_H
\end{pmatrix}
\\
=
(\bar V^T G_{\alpha} \bar V)^{-1} \bar V^T
\begin{pmatrix}
    \int_{0}^{T_1} \gamma_1(t)^T dt\\
    \vdots\\
    \int_{0}^{T_1} \gamma_M(t)^T dt
\end{pmatrix}
\end{gather*}
and thus,
\begin{equation}\label{eq:fullstateprojection} g_{id}(x) \approx \sum_{m=1}^M \hat \xi_{m} \hat \varphi_m(x).\end{equation}
The approximation error (with respect to the norm of the RKHS) approaches zero if the number of trajectories increases and the corresponding collection of occupation kernels forms a dense set. Convergence in the norm of the RKHS implies uniform convergence on compact subsets of the domain.

Consequently, a trajectory $x:[0,T] \to \mathbb{R}^n$ satisfying $\dot x = f(x)$ may be approximately expressed as \begin{equation*}
     x(t) = g_{id}(x(t)) \approx \sum_{m=1}^M \hat \xi_{m} e^{\lambda_m t} \hat \varphi_m(x(0)),
\end{equation*}
where the eigenfunctions for the finite rank approximation of $A_f$ play the role of eigenfunctions for the original operator, $A_f$.

Note that for a given $\epsilon > 0$ there is a sufficiently large collection of trajectories and occupation kernels such that $\| \tilde P_{\alpha} \tilde P_{\beta} A_{f} P_{\alpha} - A_{f} \|_H^{\tilde H} < \epsilon.$ Hence, if $\hat\varphi$ is a normalized eigenfunction for the finite rank representation with eigenvalue $\lambda$, then
\[
\| \lambda \hat\varphi - A_{f} \hat\varphi \|_{\tilde H} = \| \tilde P_{\alpha} \tilde P_{\beta} A_{f} P_{\alpha} \hat\varphi - A_{f} \hat\varphi \|_{\tilde H} \le \epsilon \| \hat\varphi\|_H = \epsilon.
\]
Consequently, given a compact subset of $\mathbb{R}^n$ and a given tolerance, $\epsilon_0$, a finite rank approximation may be selected such that for each normalized eigenfunction the relation $\left| \frac{d}{dt} \hat\varphi(x(t)) - \lambda \hat\varphi(x(t)) \right| < \epsilon_0$ for all $x(t)$ in the compact set. Hence, for sufficiently rich information, $\hat \varphi(x(t)) \approx e^{\lambda t} \hat\varphi(x(0)).$

\section{Computational Remarks for the Eigenfunction Method}

In the above computations, some entries for the matrices require a bit more analysis. Namely, this includes the inner products, $\langle \Gamma_{\gamma_i},\Gamma_{\gamma_j} \rangle_{\tilde H}$ and $\langle \Gamma_{\gamma_i},\tilde \Gamma_{\gamma_j} \rangle_{\tilde H}$. All the other quantities have been discussed at length in \cite{rosenfeld2019occupation,rosenfeld2019occupation2,rosenfeld2019dynamic}.

The second quantity simply utilizes the functional definition of the function $\tilde \Gamma_{\gamma_j}$ as a function in $\tilde H$, $\langle \Gamma_{\gamma_i},\tilde \Gamma_{\gamma_j} \rangle_{\tilde H} = \int_0^{T_j} \Gamma_{\gamma_i}(\gamma_j(t))dt = \int_0^{T_j} \int_0^{T_i} K(\gamma_j(t),\gamma_i(\tau)) d\tau dt,$ where $K$ is the kernel function for $H$. Note that this means $\langle \Gamma_{\gamma_i},\tilde \Gamma_{\gamma_j} \rangle_{\tilde H} = \langle \Gamma_{\gamma_i}, \Gamma_{\gamma_j} \rangle_{ H}$. However, the first quantity is more complicated and is context dependent. In particular, $\Gamma_{\gamma_i}$ is not the occupation kernel corresponding to $\tilde H$, so it's functional relationship cannot be exploited in the same manner. On the other hand, $\Gamma_{\gamma_i}(x) = \int_0^{T_i} K(x,\gamma_i(t))$. To compute the inner product in $\tilde H$, a specific selection of spaces must be considered.

In the particular setting where $H = F_{\mu_1}^2(\mathbb{R}^n)$ and $\tilde H = F_{\mu_2}^2(\mathbb{R}^n)$, with $\mu_1 < \mu_2$, it follows that $\Gamma_{\gamma_i}(x) = \int_0^T e^{\mu_1 x^T \gamma_i(t)}dt.$ Moreover, $K(x,\gamma_i(t)) = e^{\mu_1 x^T \gamma_i(t)} = e^{\mu_2 x^T\left(\frac{\mu_1}{\mu_2} \gamma(t)\right)} = \tilde K(x,(\mu_1/\mu_2)\gamma_i(t))$. Hence, $\Gamma_{\gamma_i}(x) = \tilde \Gamma_{(\mu_1/\mu_2)\gamma_i}(x)$, and
\begin{gather*}
\langle \Gamma_{\gamma_i},\Gamma_{\gamma_j} \rangle_{\tilde H} = \langle\tilde\Gamma_{(\mu_1/\mu_2)\gamma_i},\tilde\Gamma_{(\mu_1/\mu_2)\gamma_j} \rangle_{\tilde H}\\= \int_0^{T_i}\int_0^{T_j} \tilde K((\mu_1/\mu_2)\gamma_i(t),(\mu_1/\mu_2)\gamma_j(\tau)) d\tau dt .
\end{gather*}
\section{Numerical Results}\label{sec:numericalresults}

This section presents the results obtained through implementation of Section \ref{sec:eigenfunctiondmd} with the domain viewed as embedded in the range of the operator. The experiments were performed on the benchmark cyllinder flow data set found in \cite{kutz2016dynamic} by setting $mu_1 = 1/1000$ and $\mu_2 = 1/999$ for the exponential dot product kernel. It should be noted that the timesteps for that data set are $h = 0.02$. The total dataset comprises $151$ snapshots, and the trajectories for the system were selected from strings of adjacent snapshots of length $5$ yielding $147$ trajectories. Computations were performed using Simpson's Rule.

Presented in Figure \ref{fig:LiouvilleModes} are a selection of approximate Liouville modes obtained for this operator through the finite rank approximation determined by Section \ref{sec:eigenfunctiondmd}. Examples of the reconstructed and original data are shown in Figure \ref{fig:reconstruction} and Figure \ref{fig:original}.

\begin{figure}
    \label{fig:LiouvilleModes}
    \centering
    \includegraphics[width=0.48\textwidth]{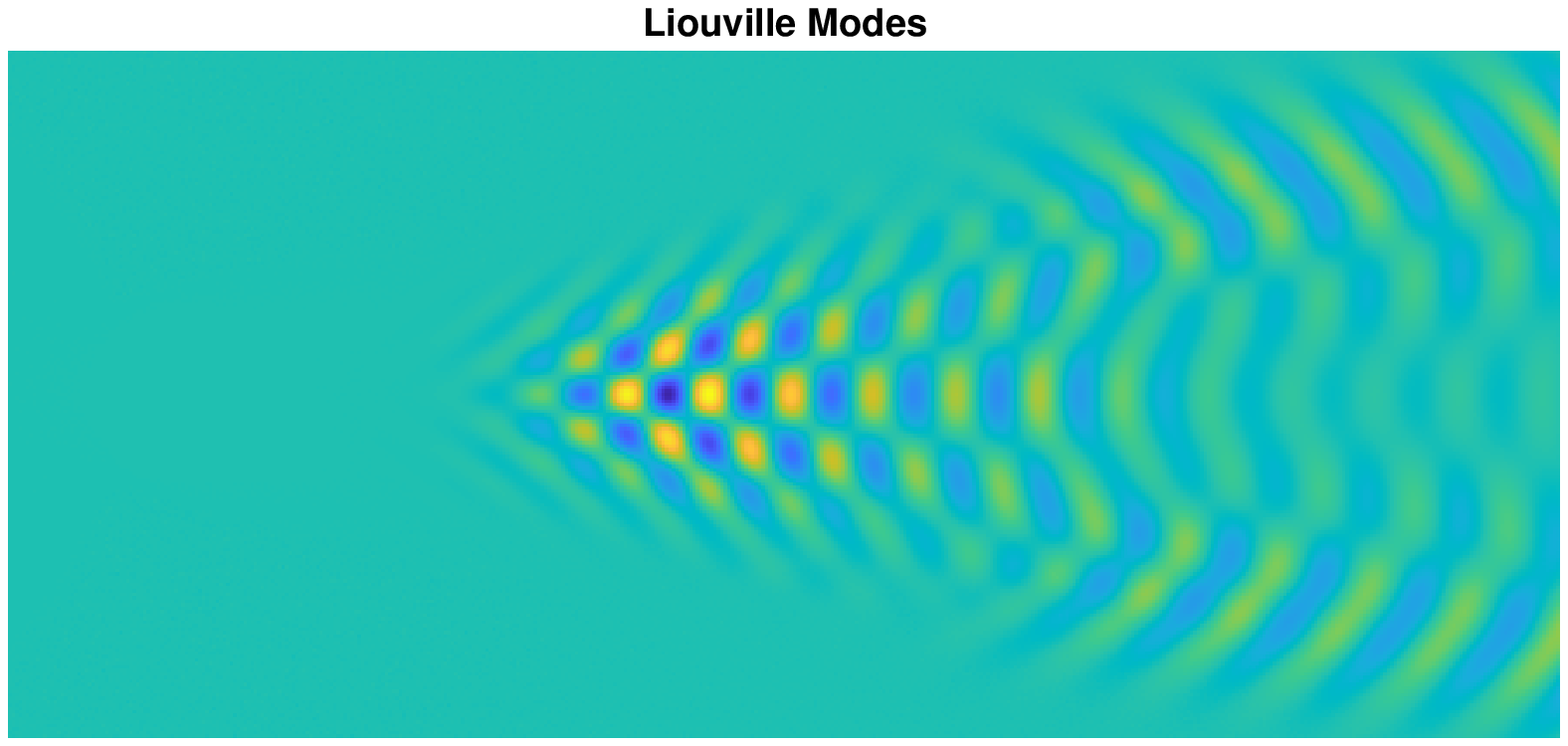}
    \includegraphics[width=0.48\textwidth]{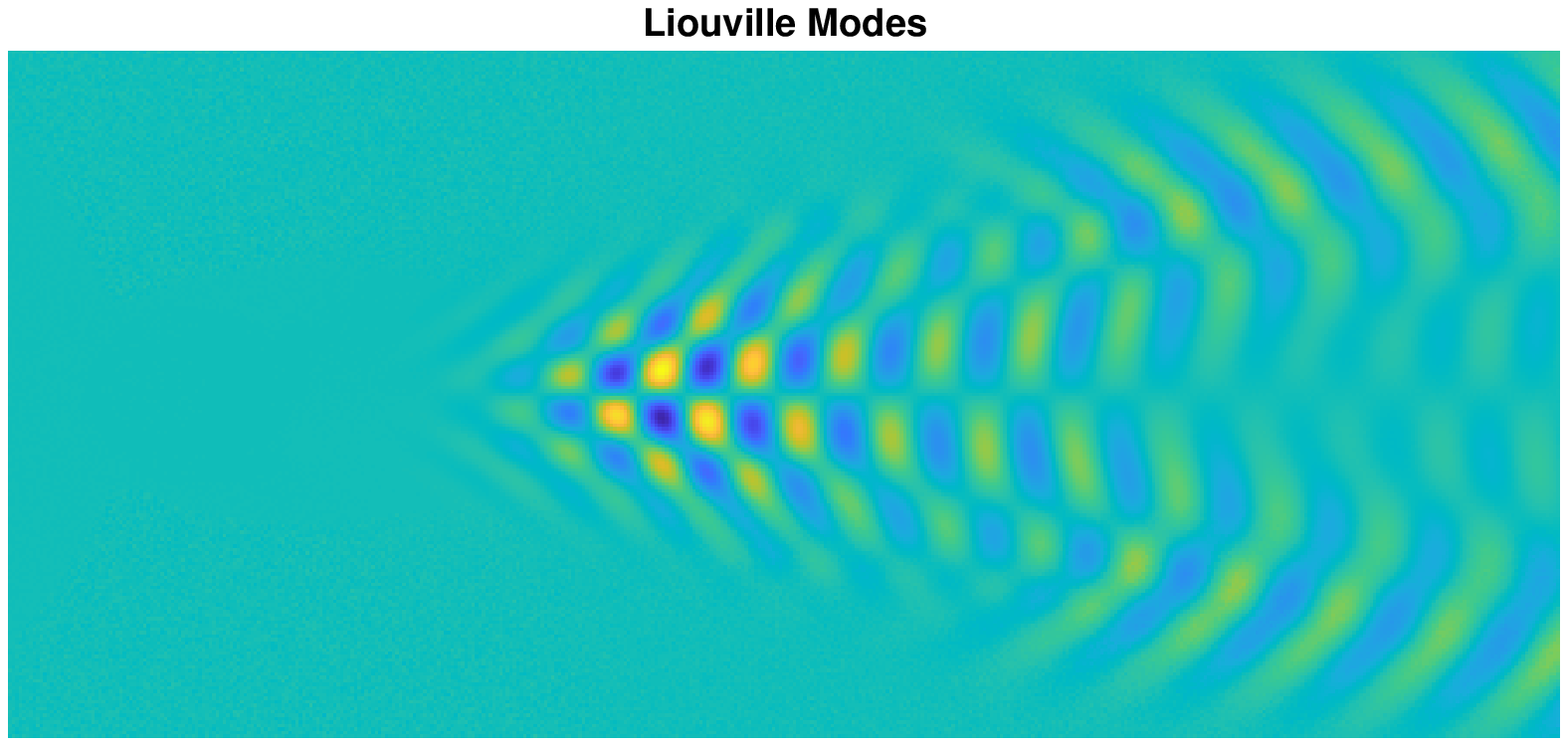}

    \includegraphics[width=0.48\textwidth]{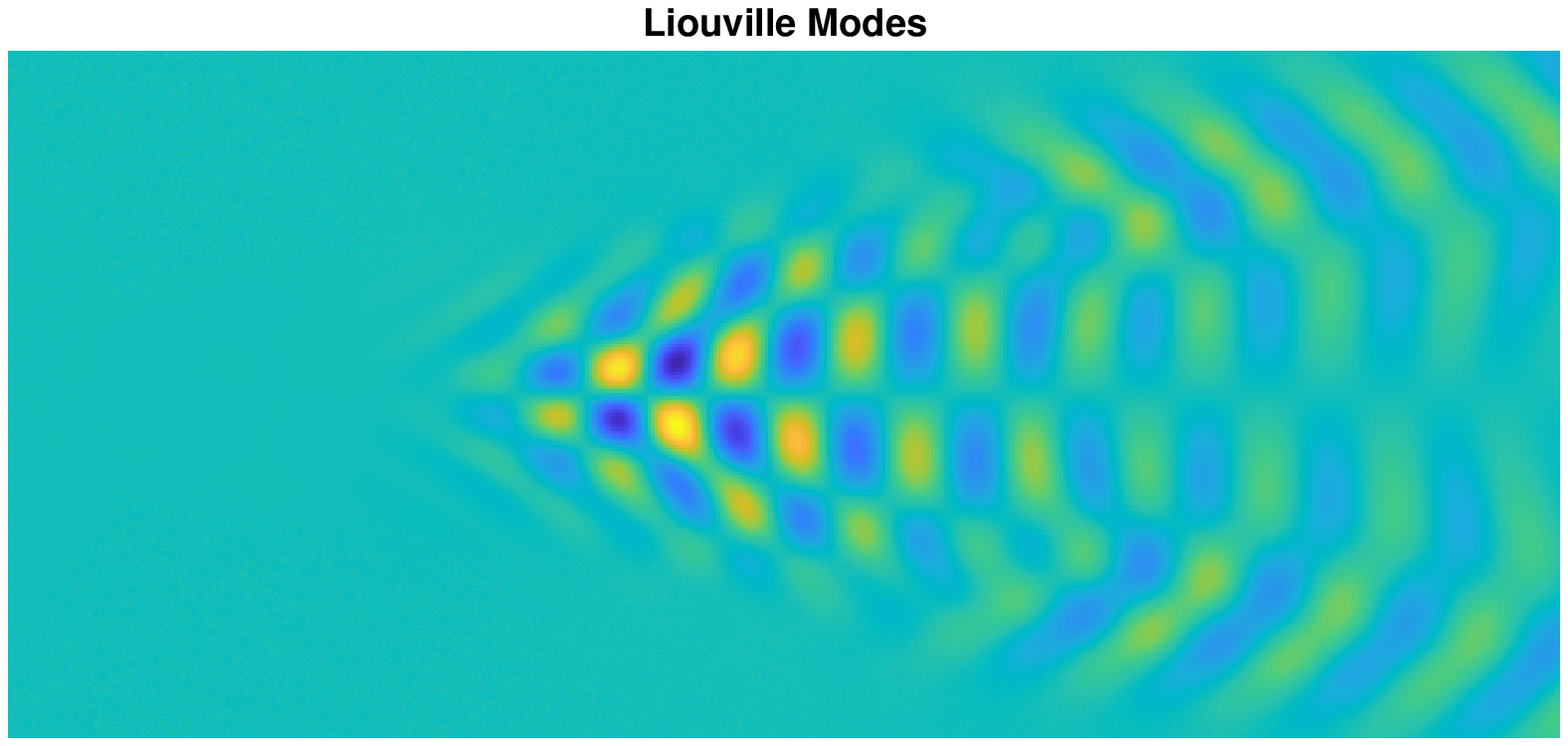}
    \includegraphics[width=0.48\textwidth]{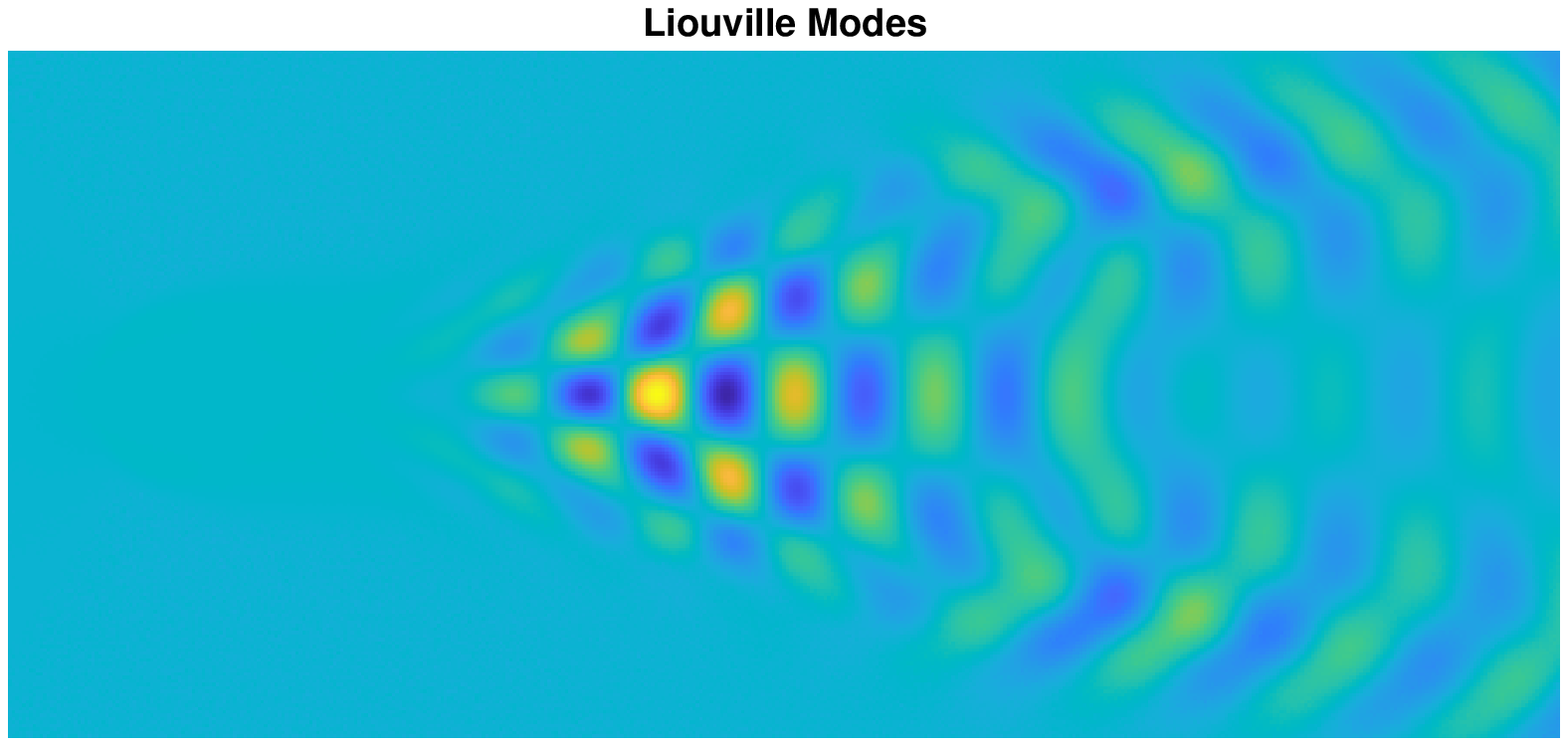}

    \includegraphics[width=0.48\textwidth]{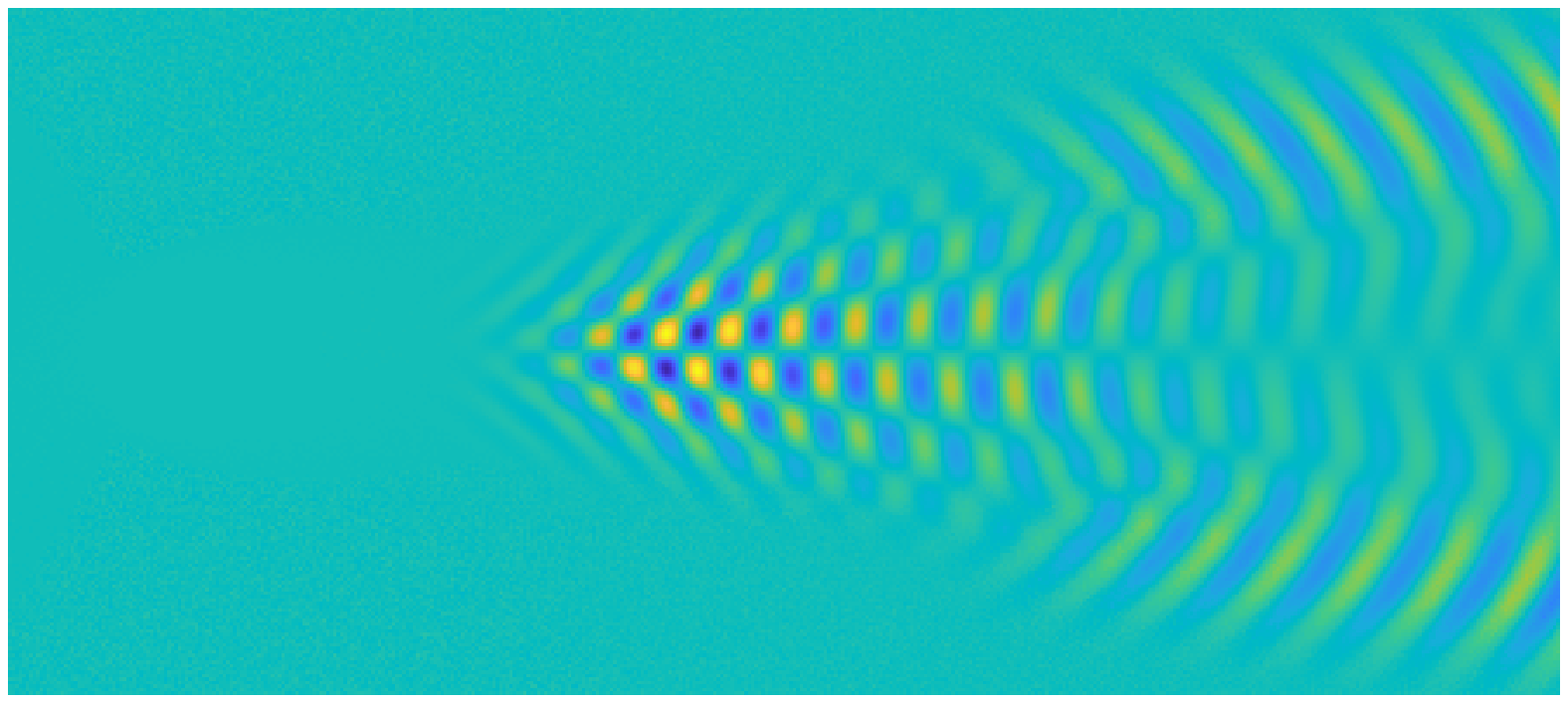}
    \includegraphics[width=0.48\textwidth]{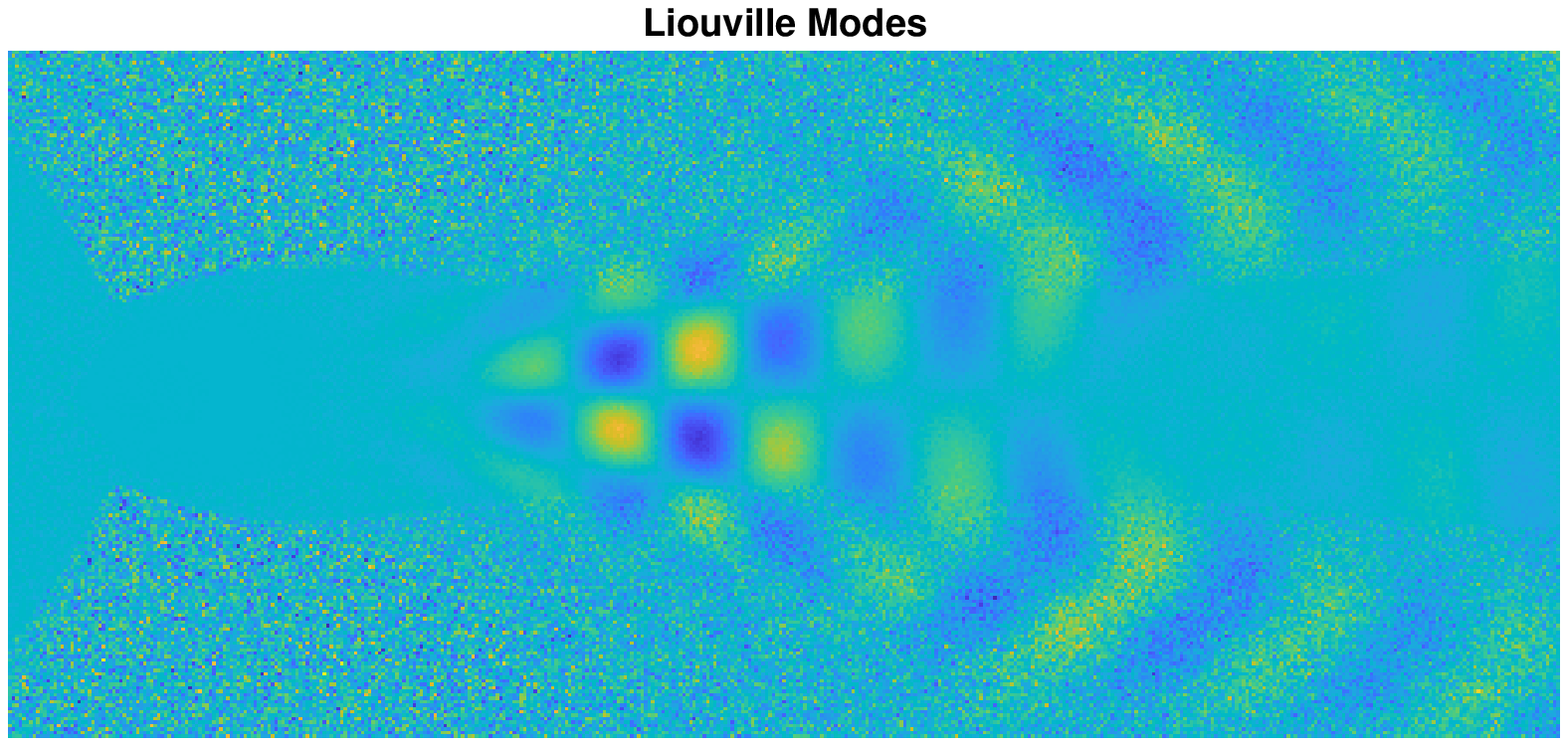}
    
    \includegraphics[width=0.48\textwidth]{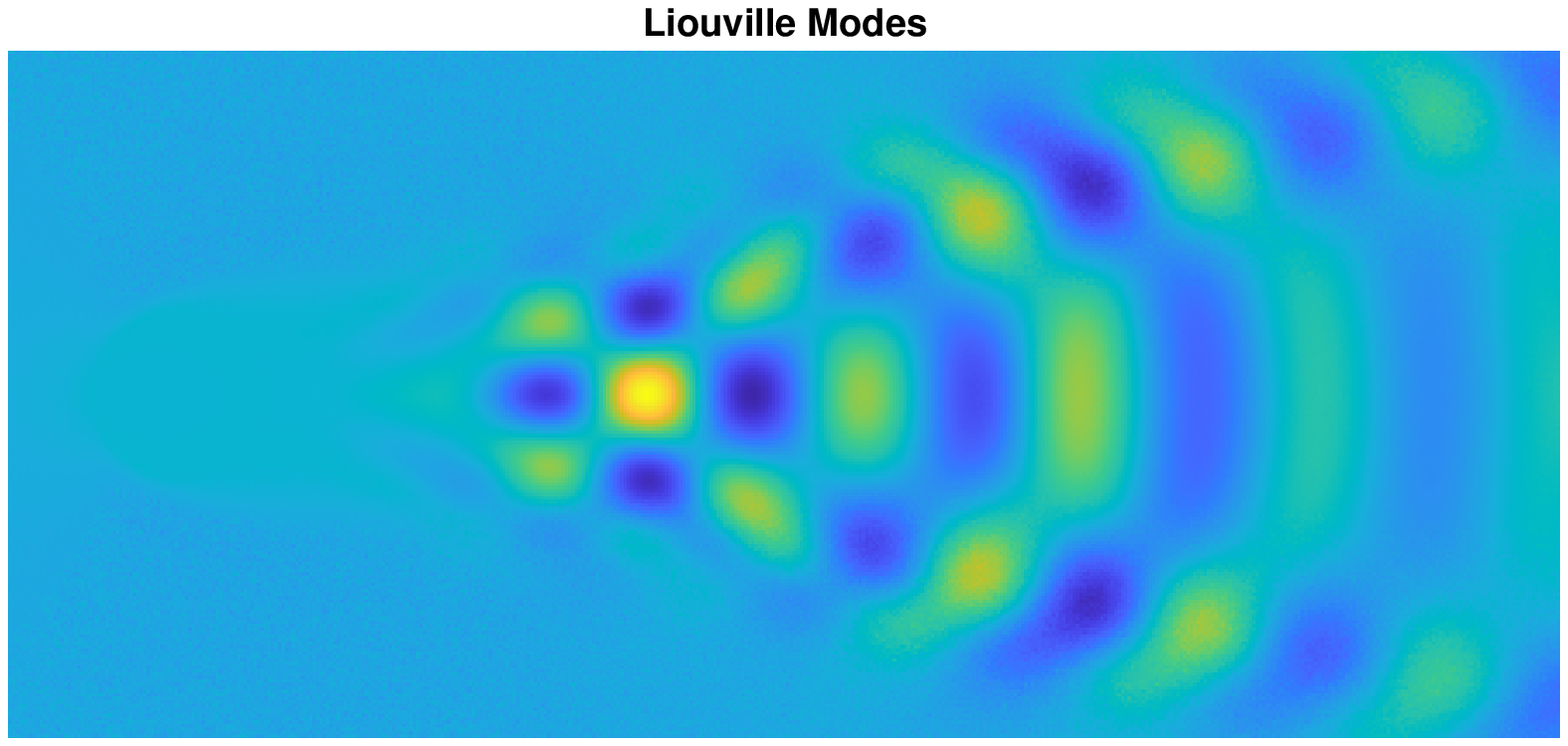}
    \includegraphics[width=0.48\textwidth]{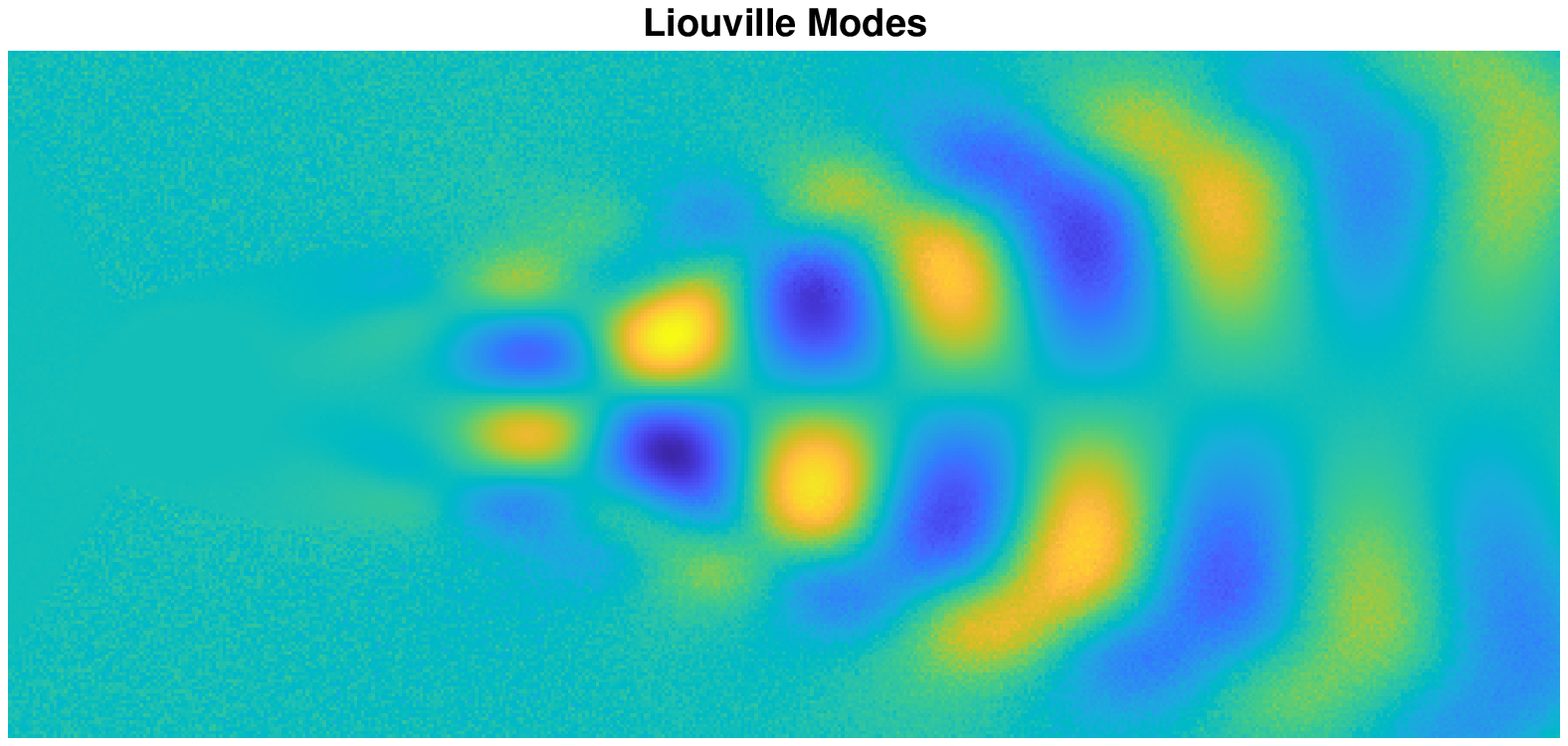}

    \includegraphics[width=0.48\textwidth]{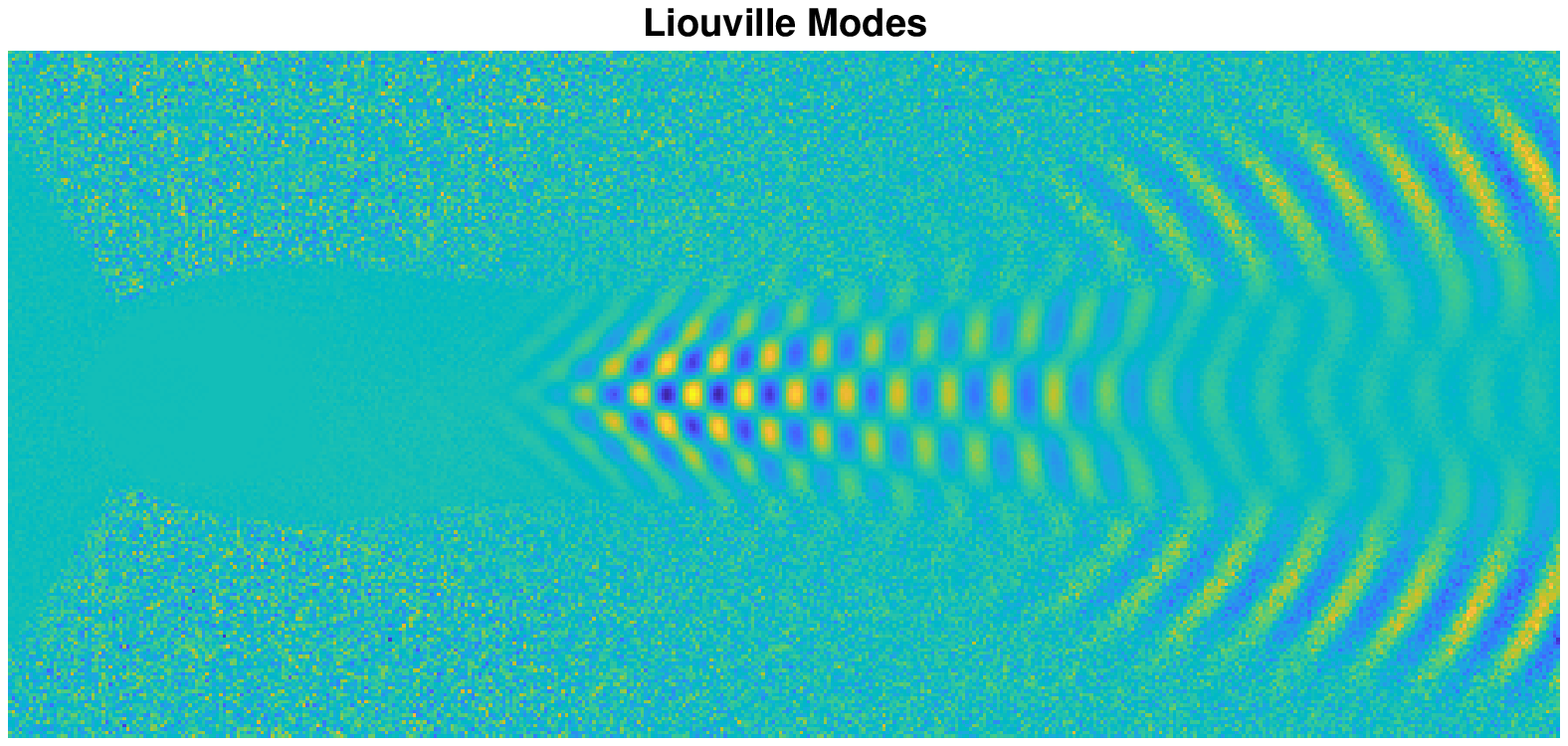}
    \includegraphics[width=0.48\textwidth]{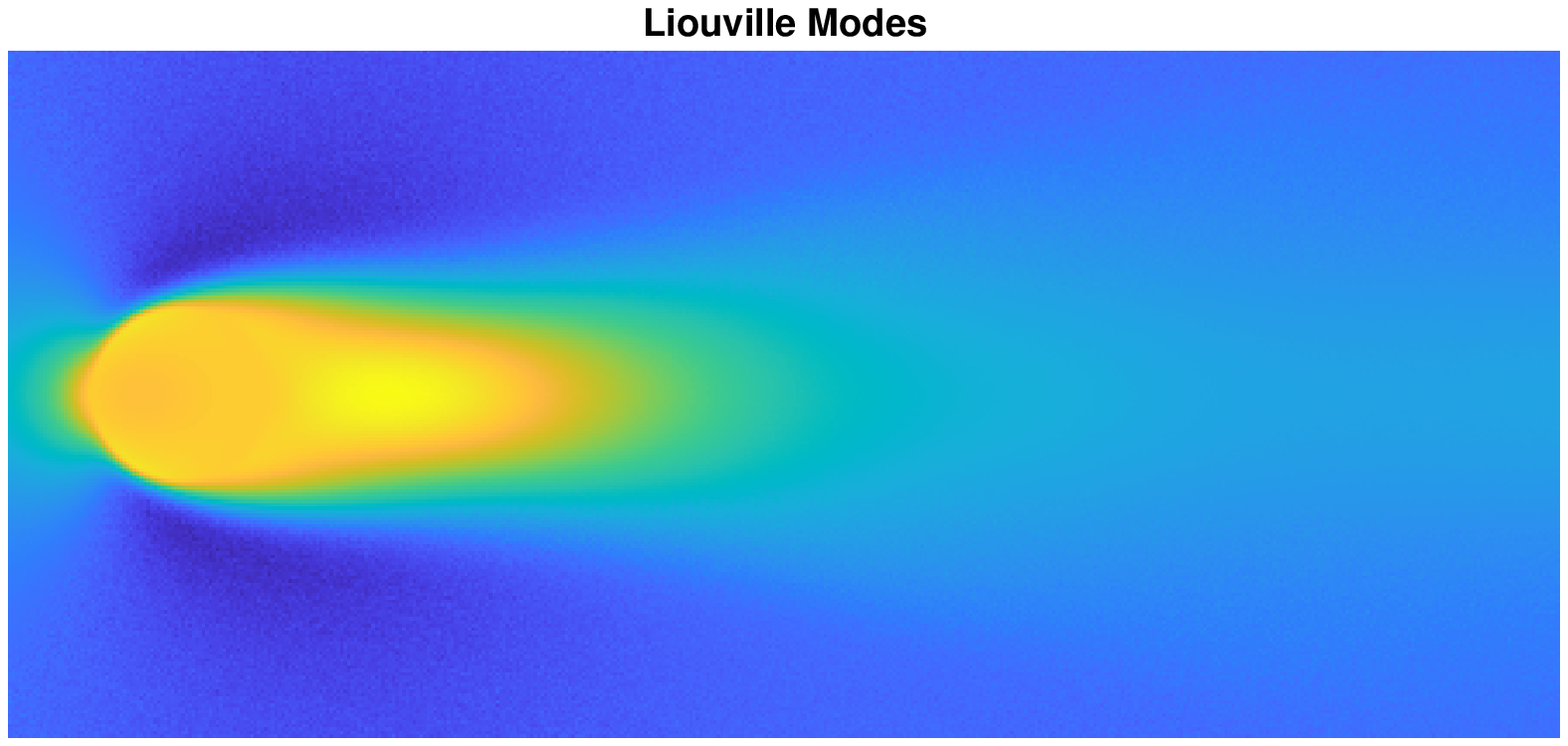}

    \includegraphics[width=0.48\textwidth]{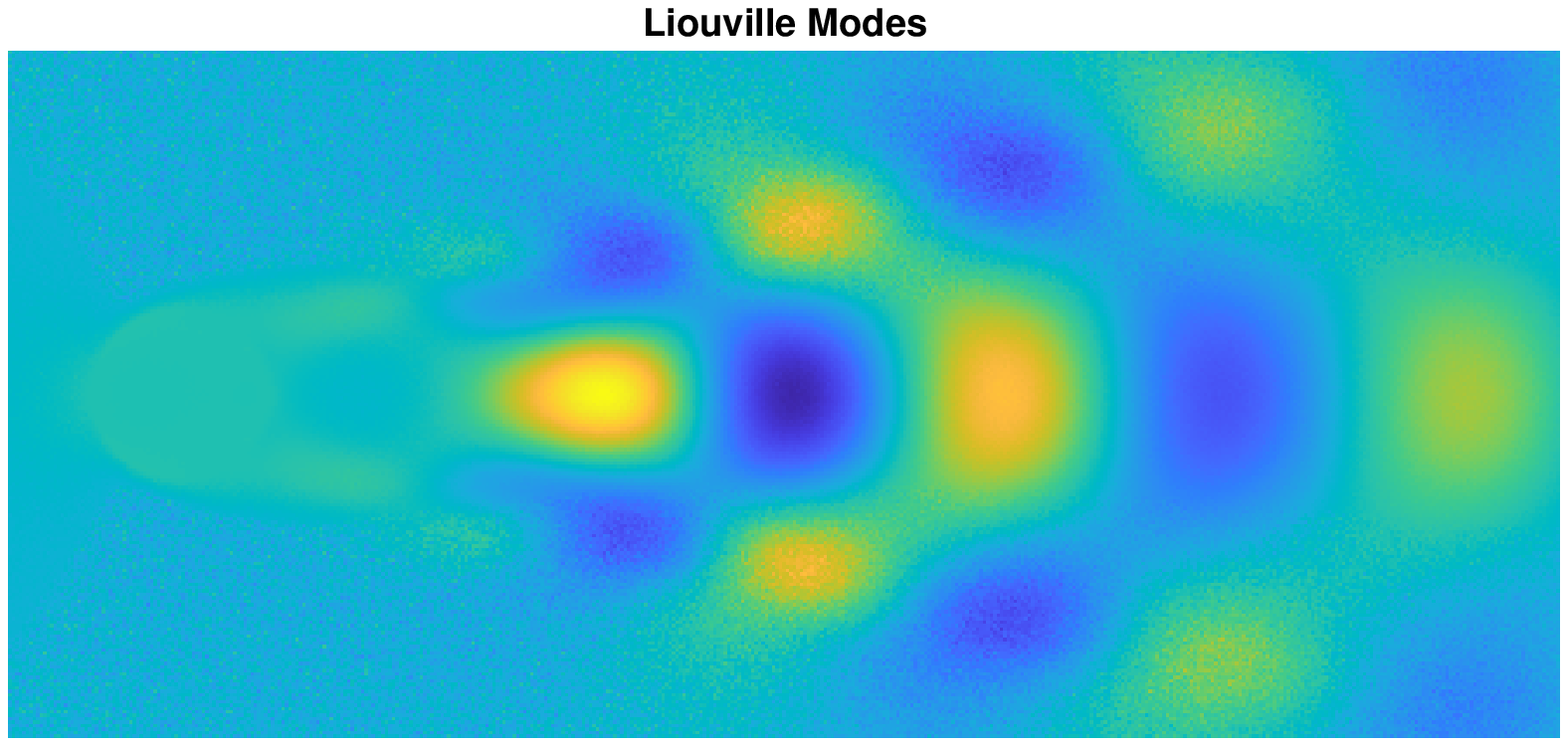}
    \includegraphics[width=0.48\textwidth]{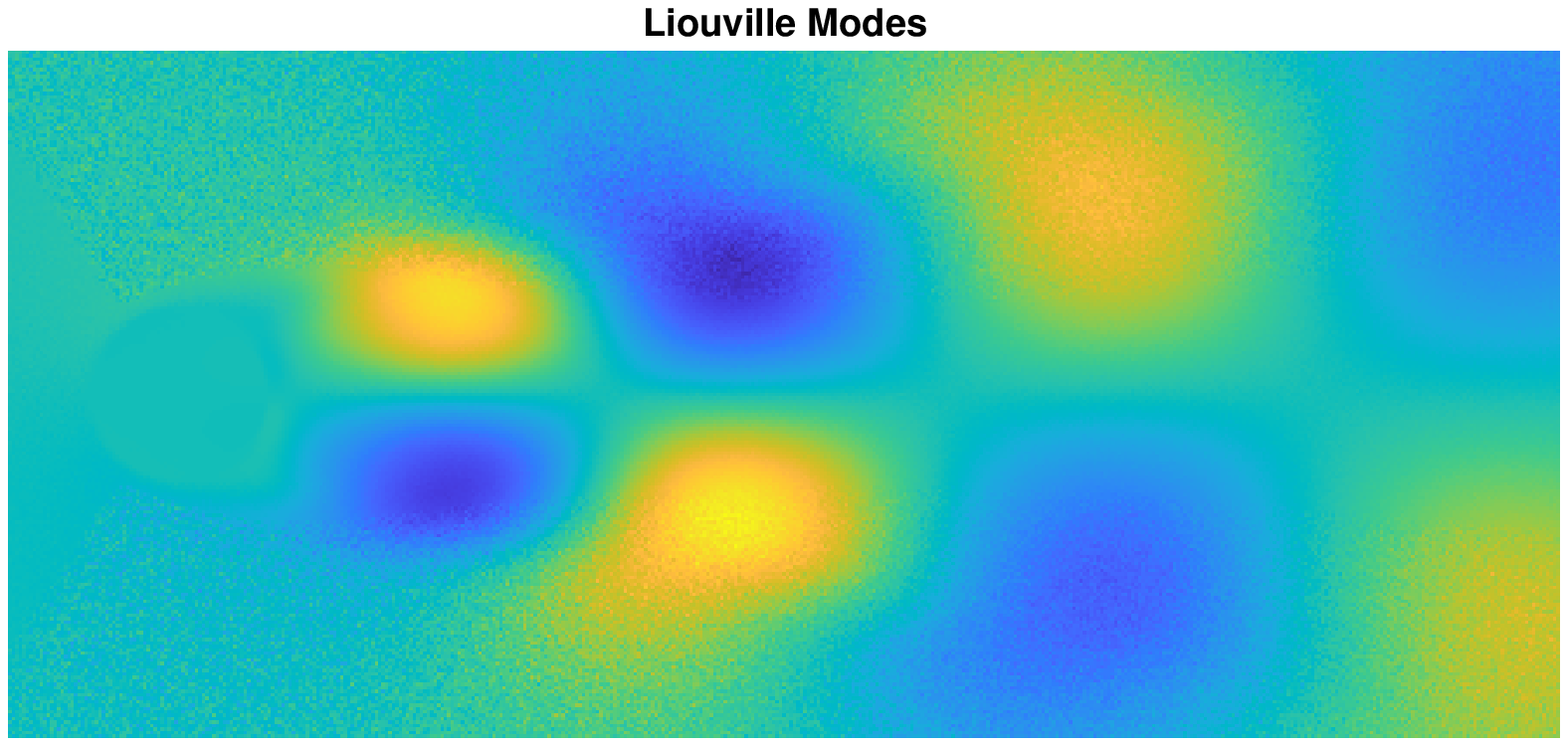}
    \caption{A selection of the real parts of approximate Liouville modes obtained using the exponential dot product kernel, where the domain corresponds to $\mu_1 = 1/1000$ and the range corresponds to $\mu_2 = 1/999$.}
\end{figure}

\begin{figure}
    \label{fig:reconstruction}
    \centering
    \includegraphics[width=0.48\textwidth]{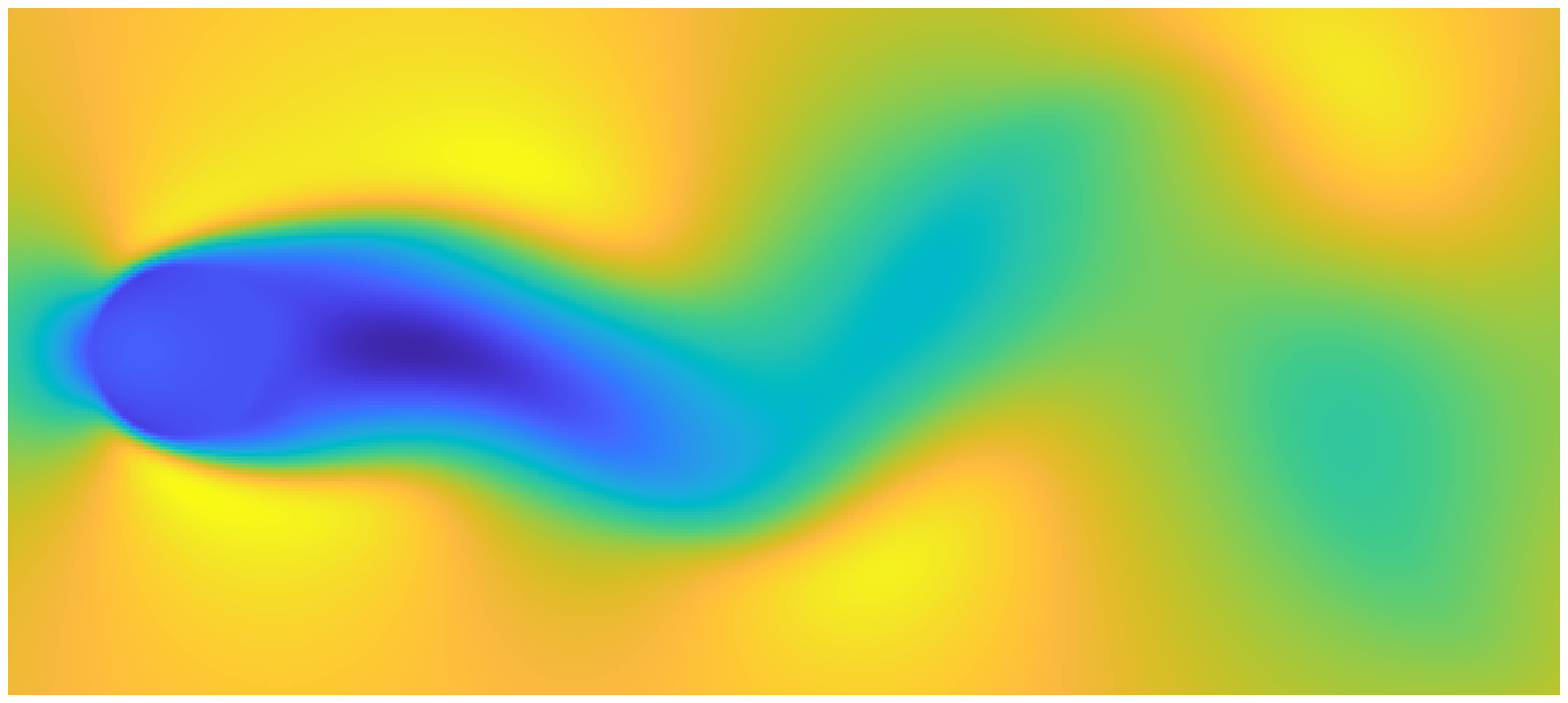}
    \includegraphics[width=0.48\textwidth]{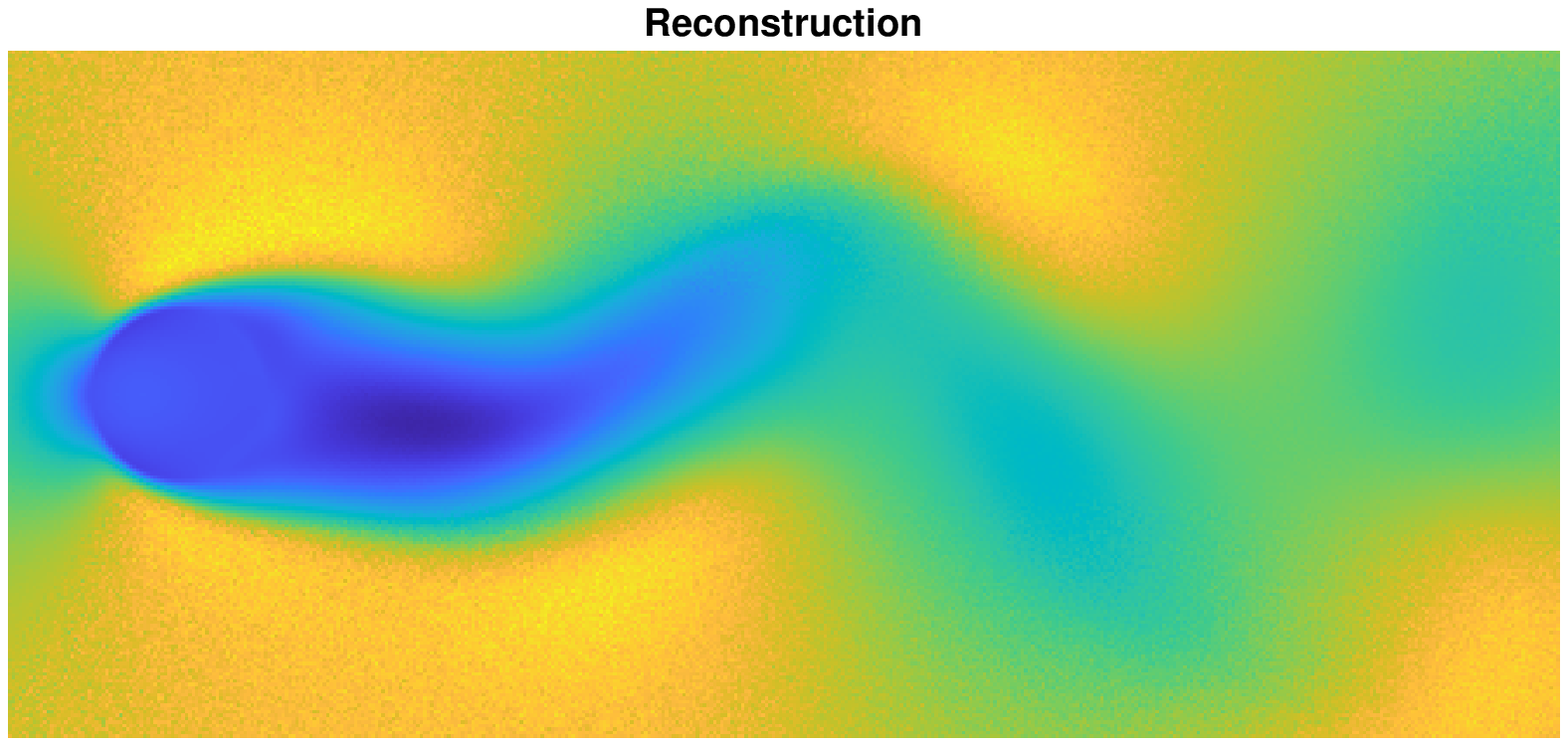}

    \includegraphics[width=0.48\textwidth]{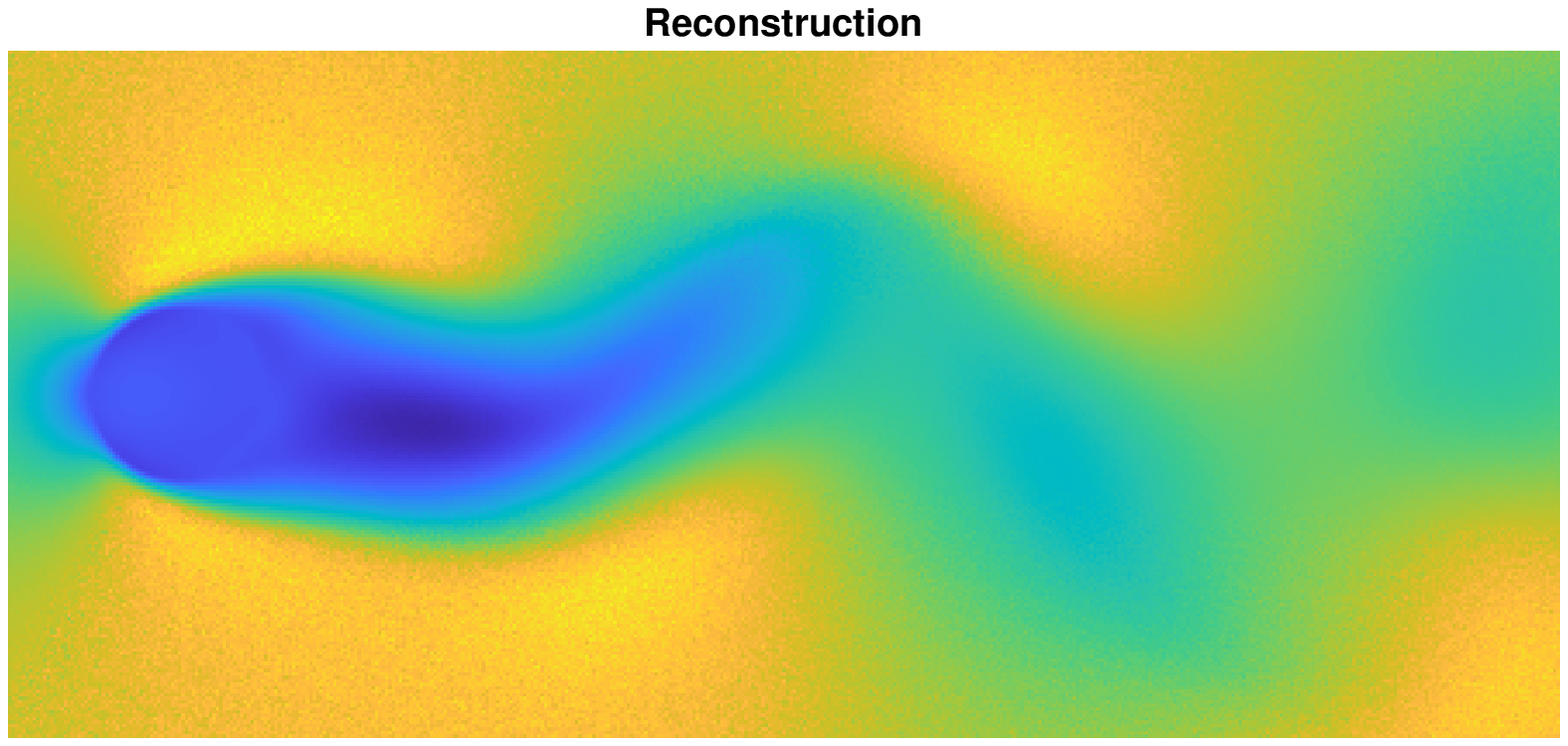}
    \includegraphics[width=0.48\textwidth]{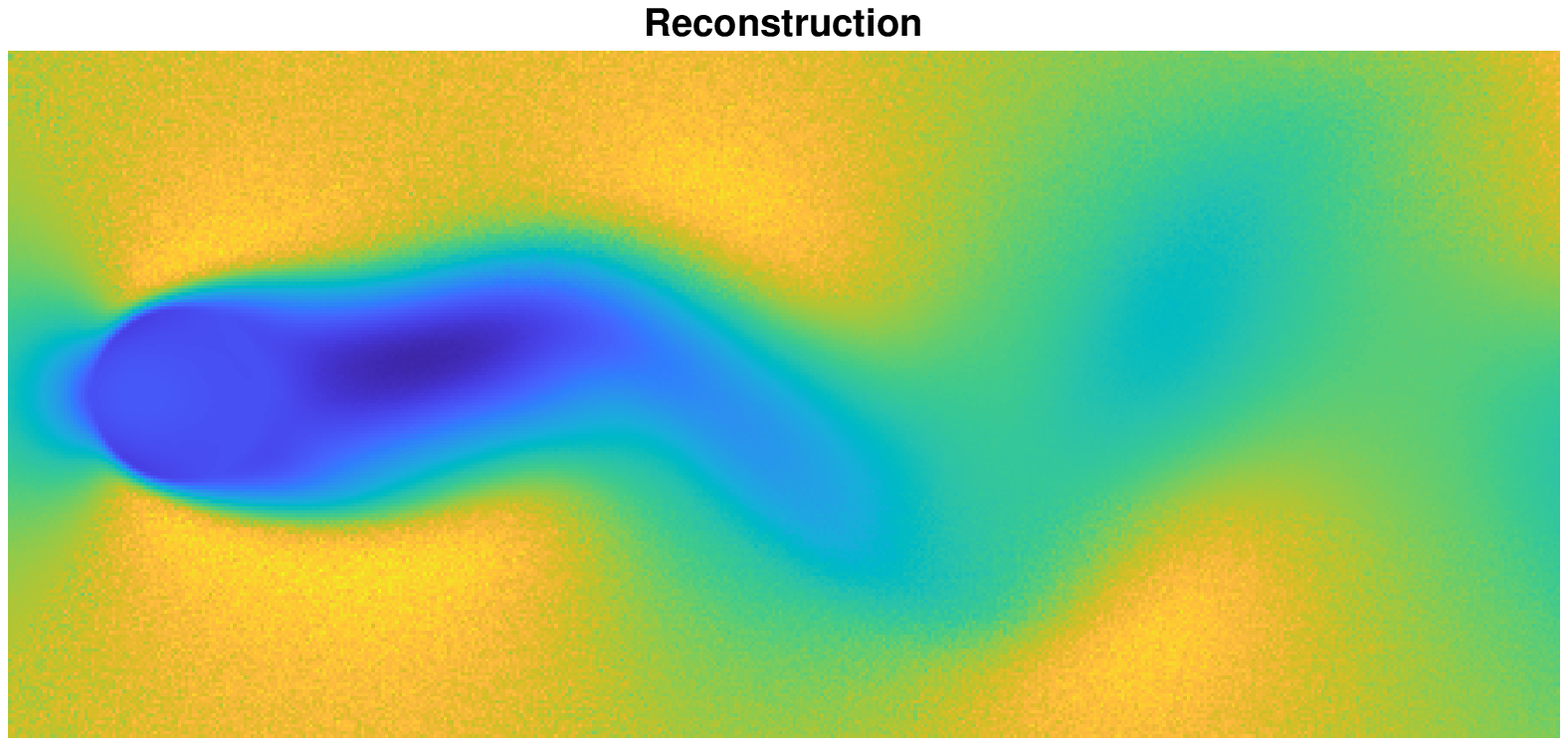}

    \includegraphics[width=0.48\textwidth]{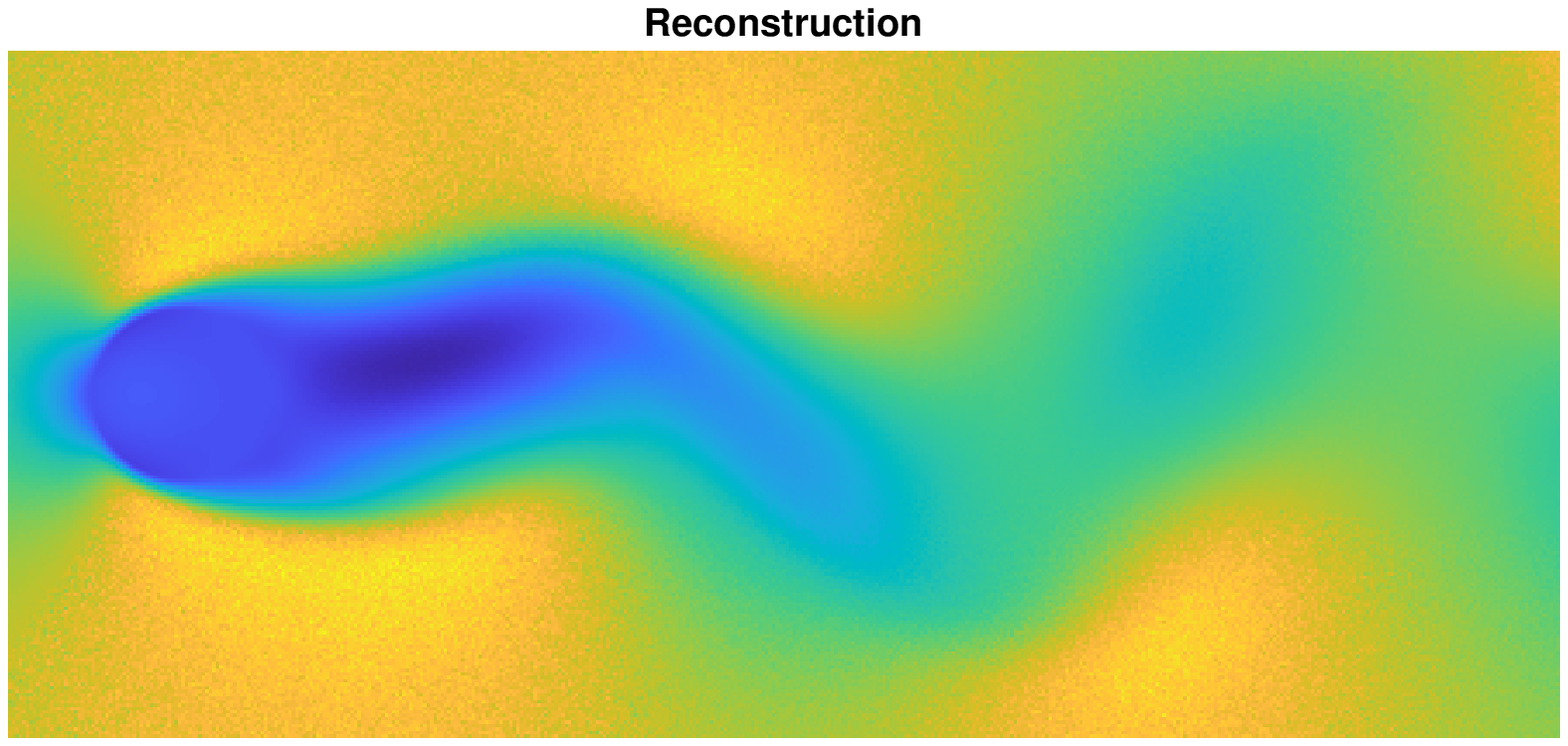}
    \includegraphics[width=0.48\textwidth]{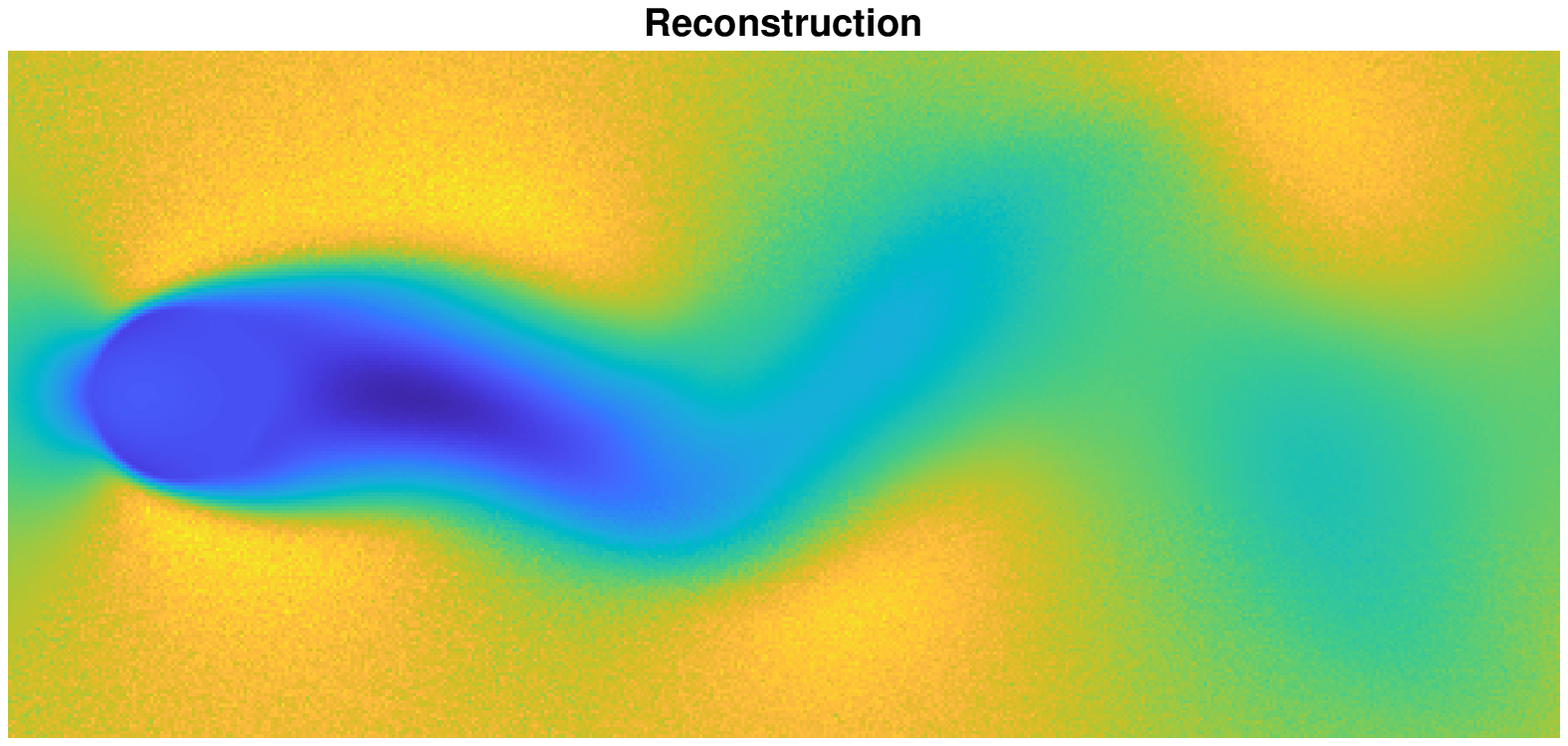}
    
    \includegraphics[width=0.48\textwidth]{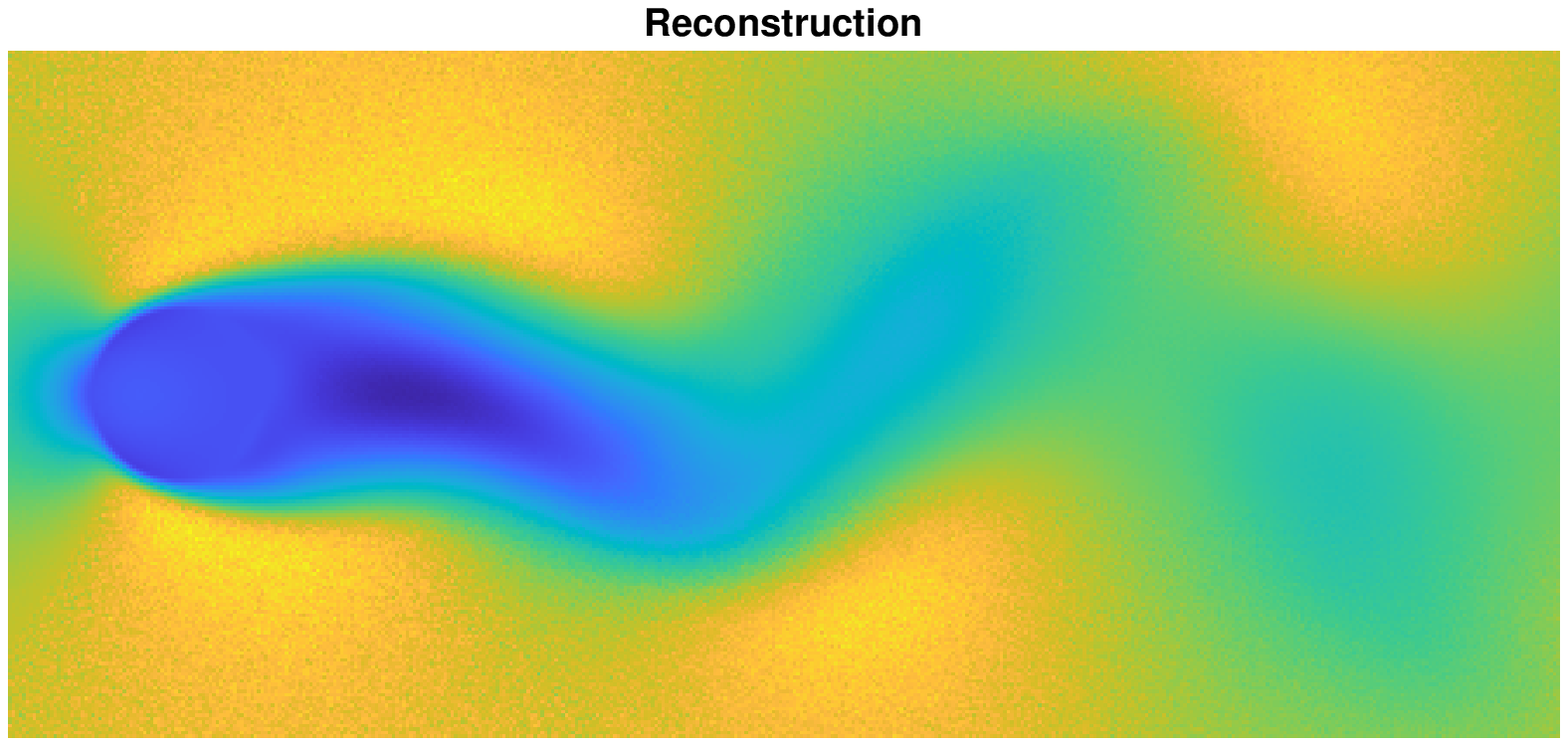}
    \includegraphics[width=0.48\textwidth]{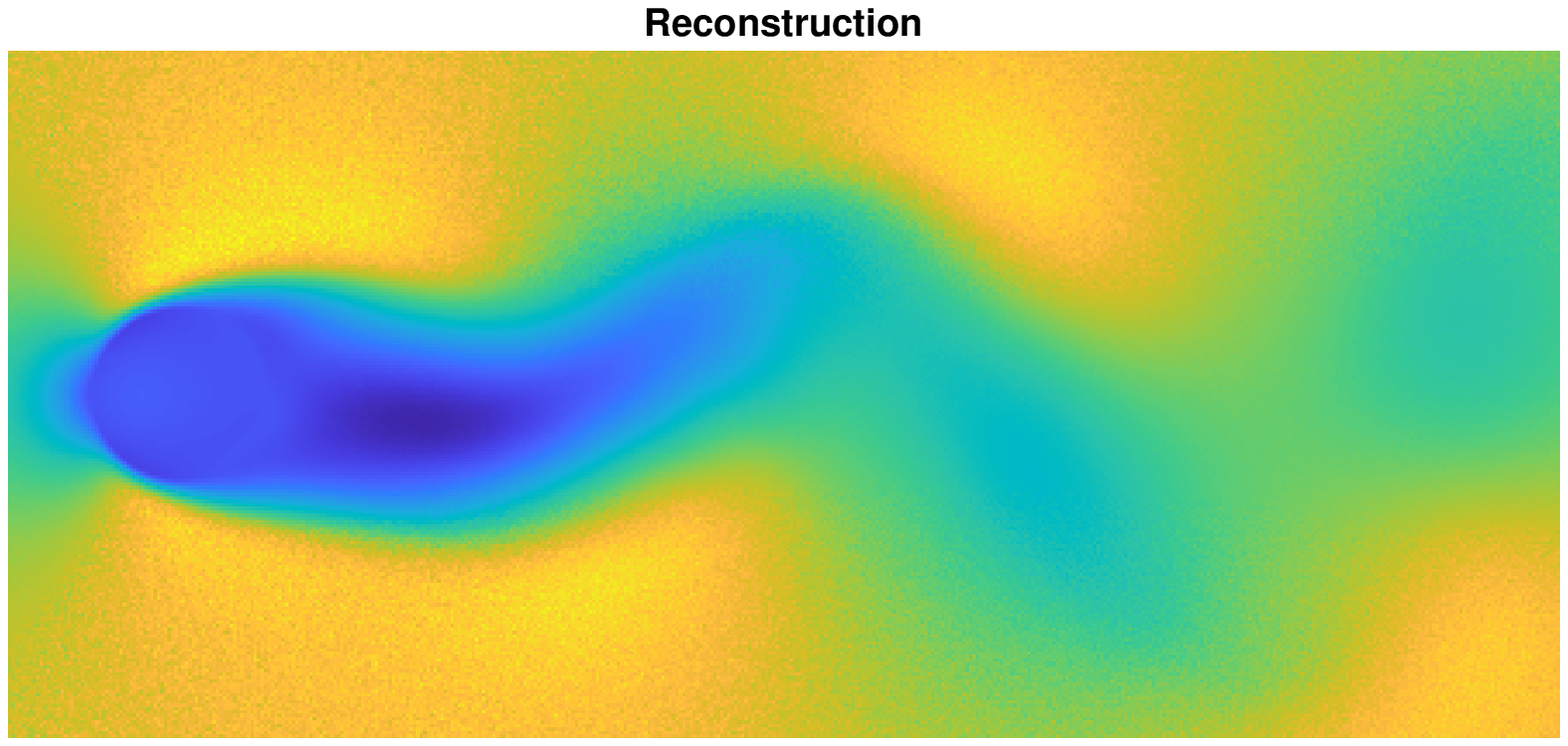}
    
        \caption{A selection of reconstructed snapshots for the cyllinder flow example. The first column from the top presents snapshots $1$, $21$, $41$, and $61$, and the second column presents $81$, $101$, $121$, and $141$.}
\end{figure}

\begin{figure}
    \label{fig:original}
    \centering
    \includegraphics[width=0.48\textwidth]{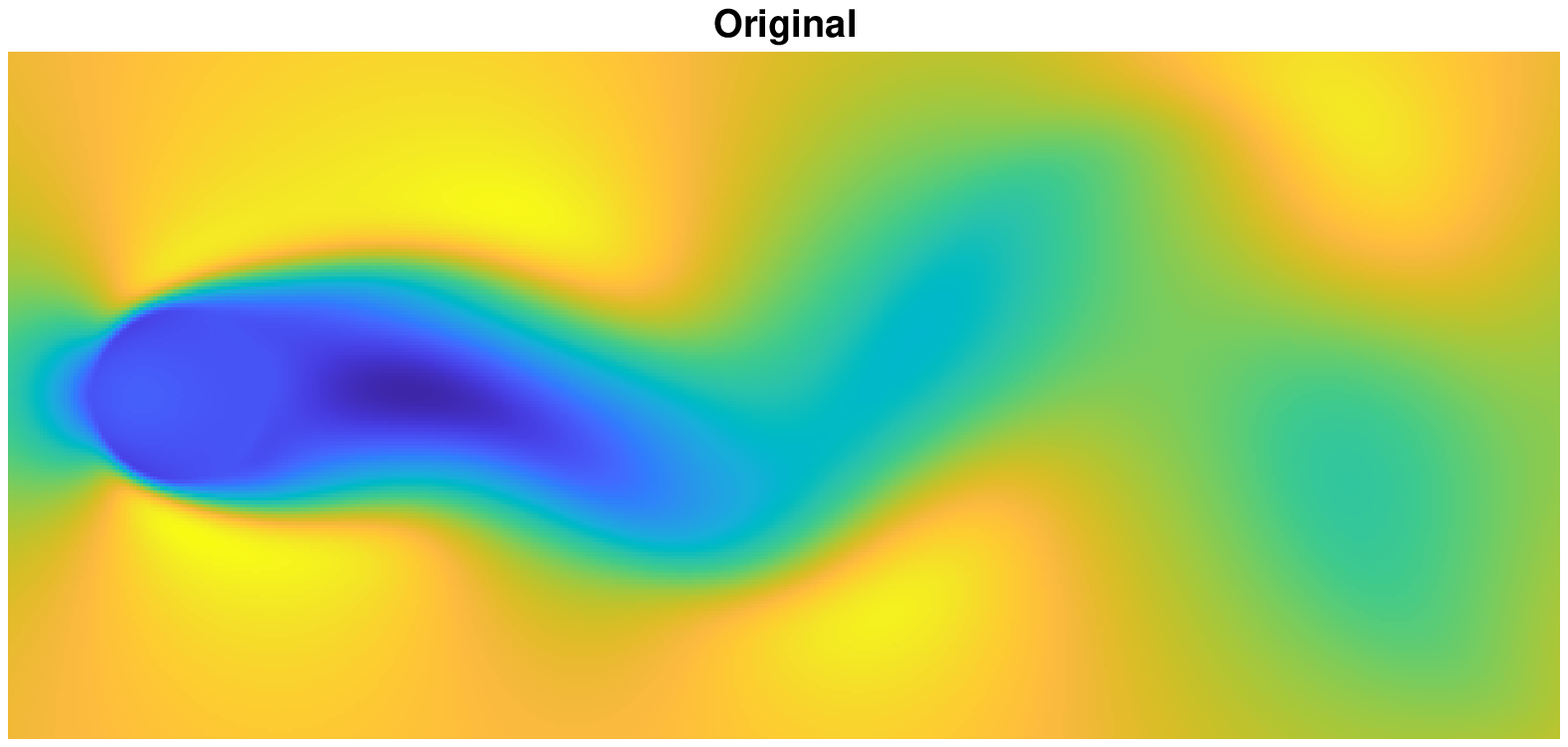}
    \includegraphics[width=0.48\textwidth]{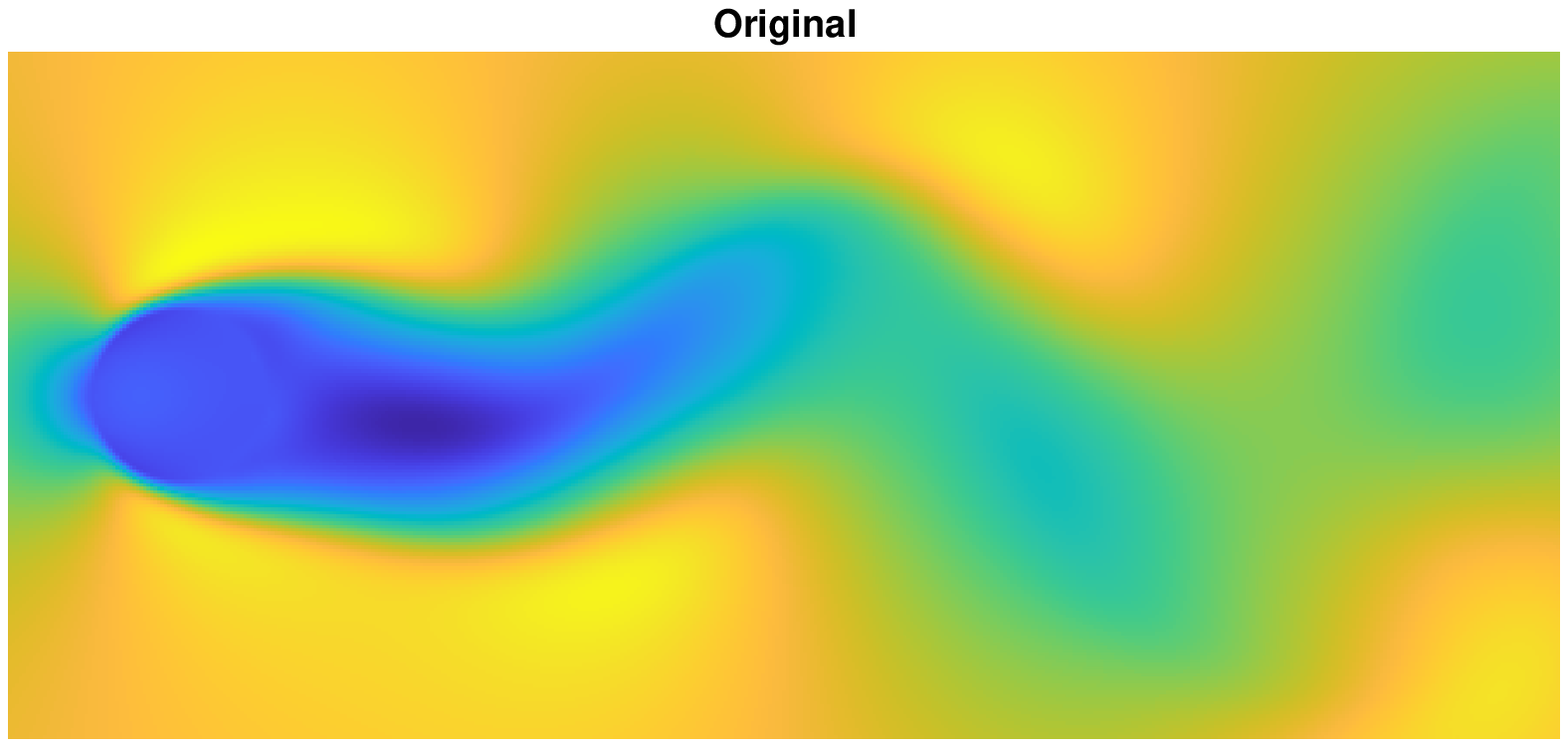}

    \includegraphics[width=0.48\textwidth]{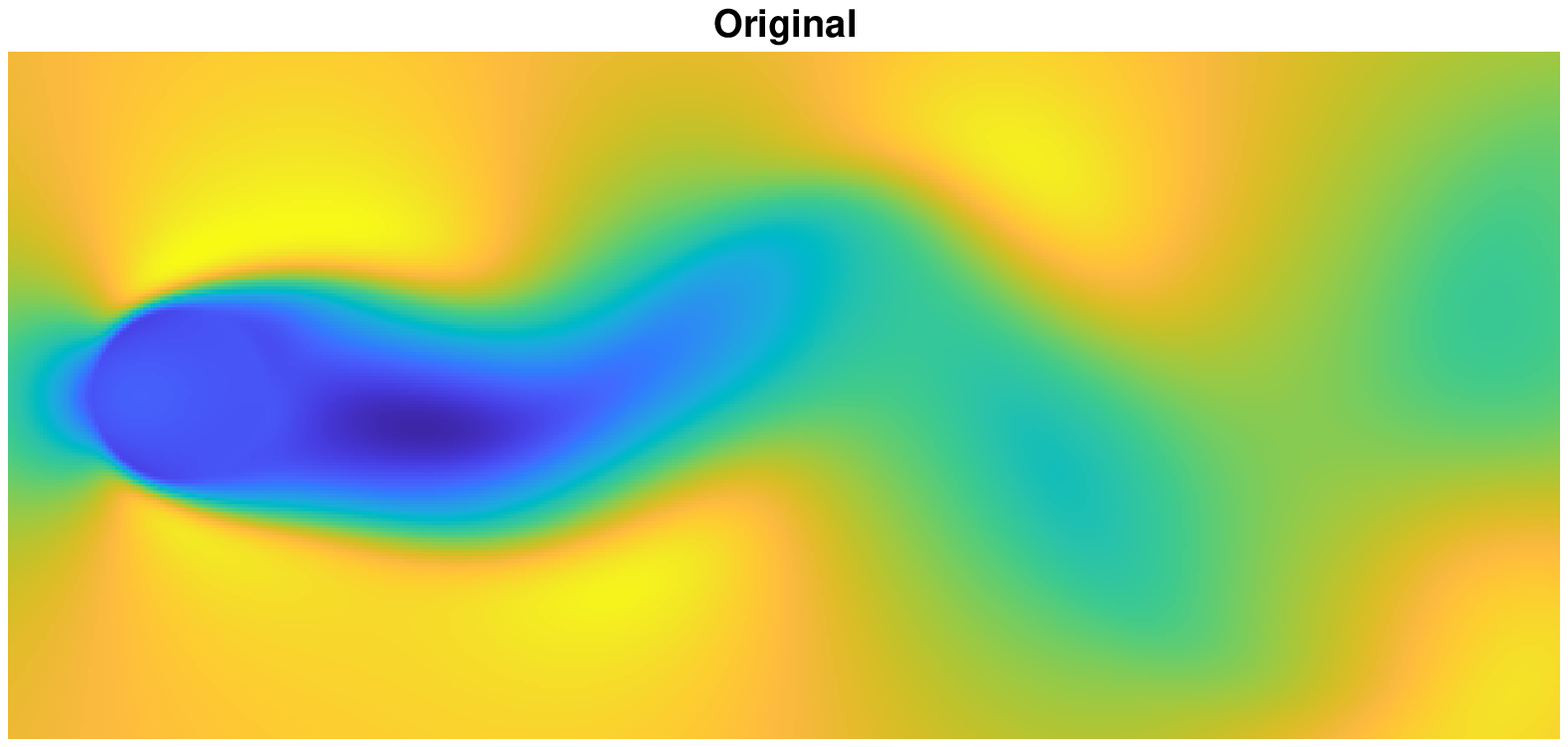}
    \includegraphics[width=0.48\textwidth]{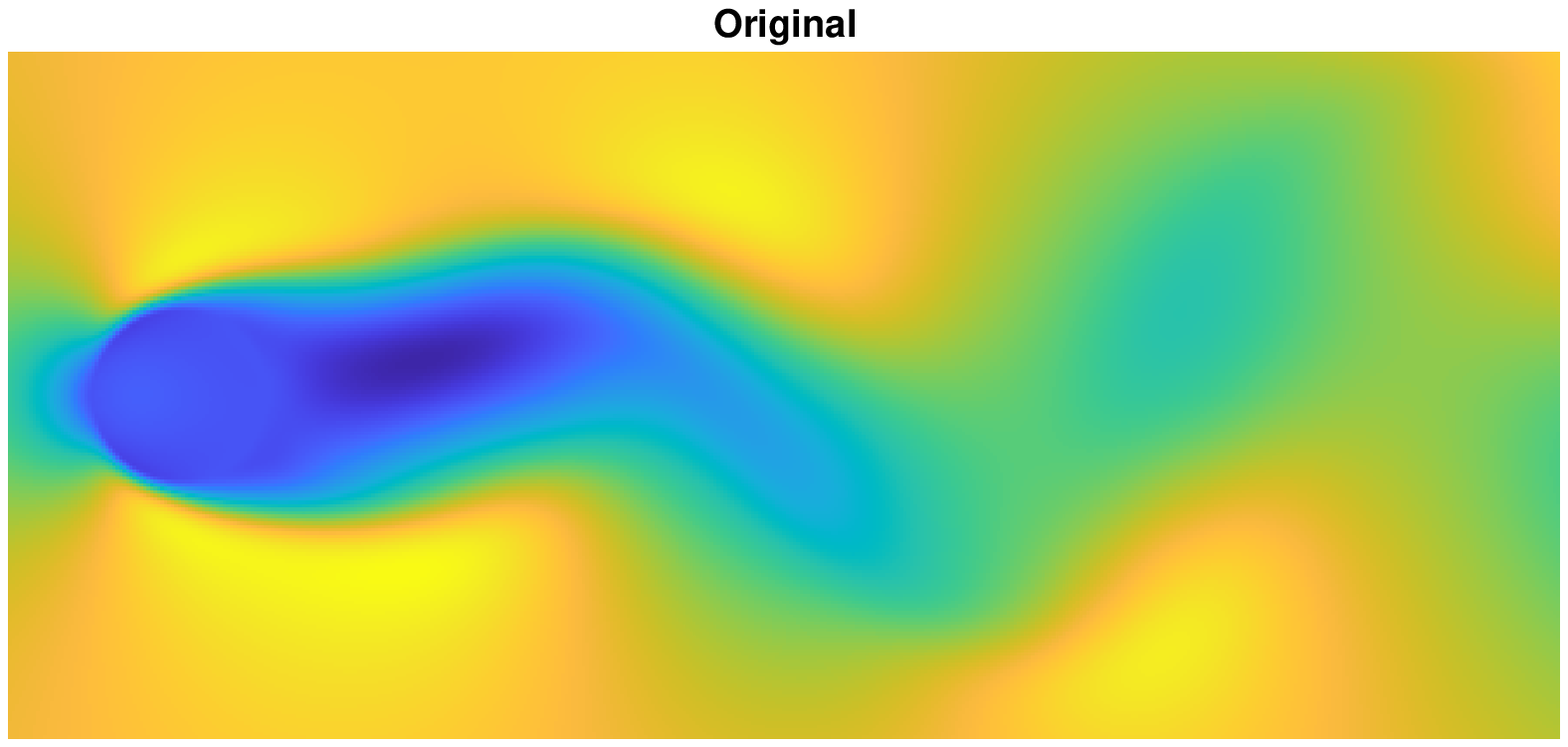}

    \includegraphics[width=0.48\textwidth]{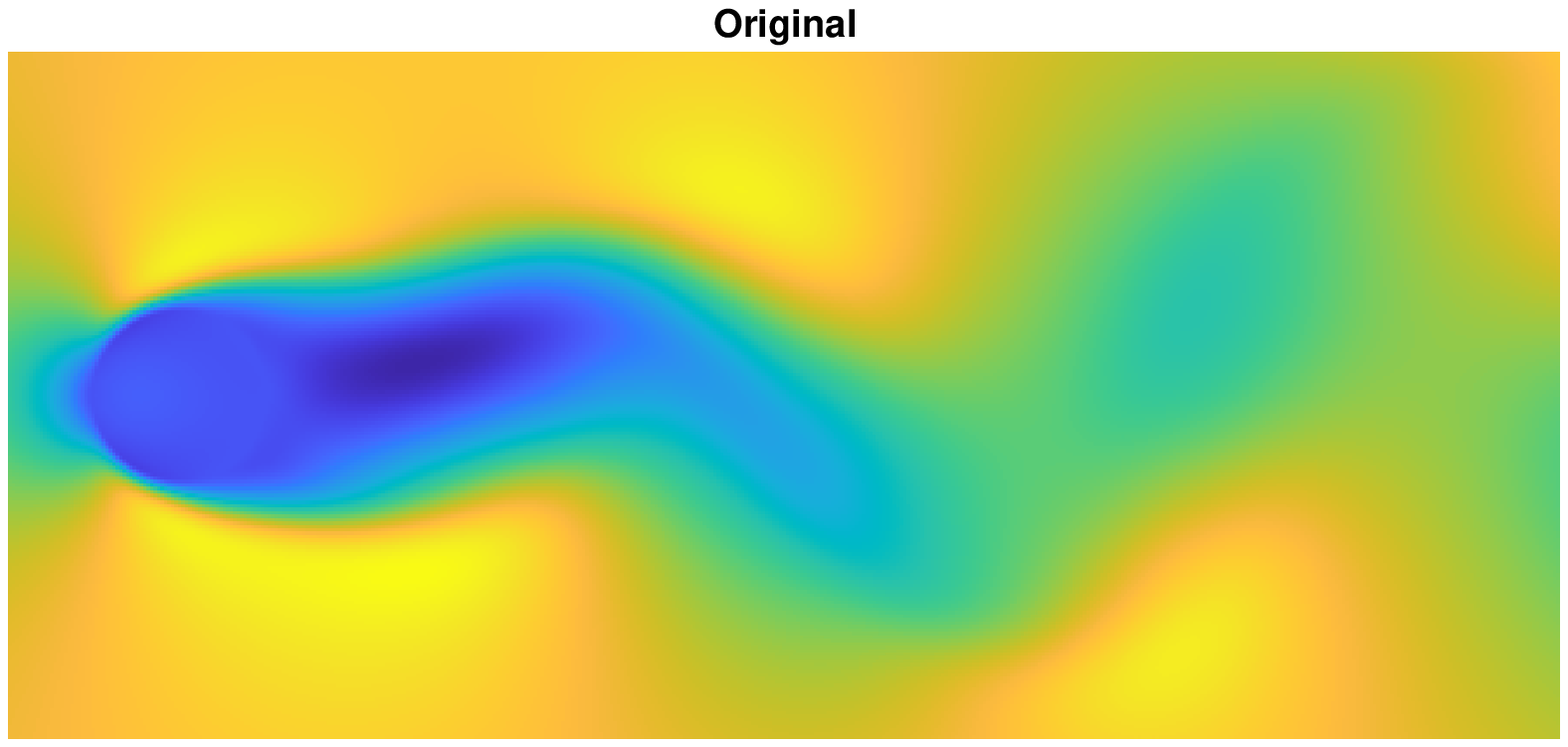}
    \includegraphics[width=0.48\textwidth]{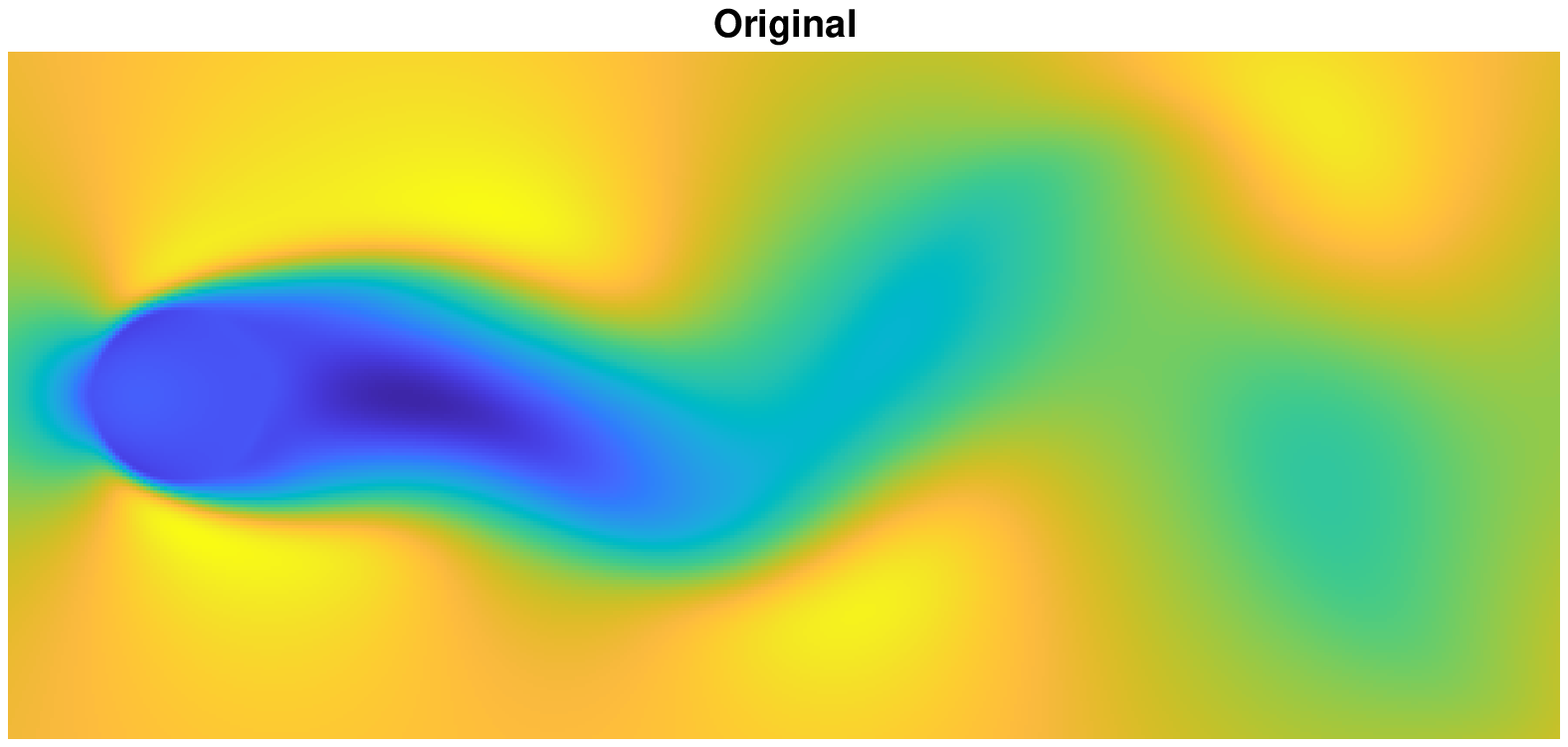}
    
    \includegraphics[width=0.48\textwidth]{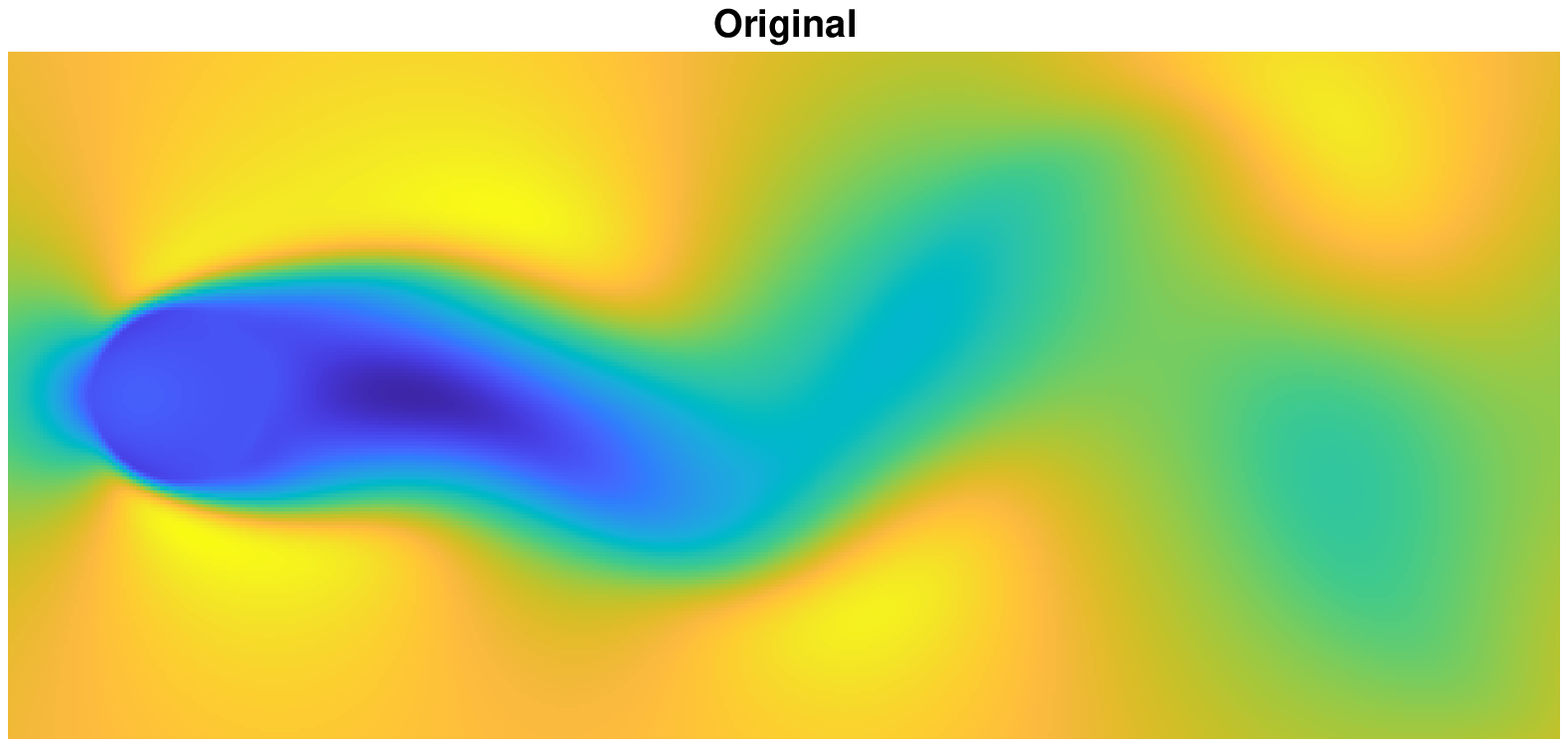}
    \includegraphics[width=0.48\textwidth]{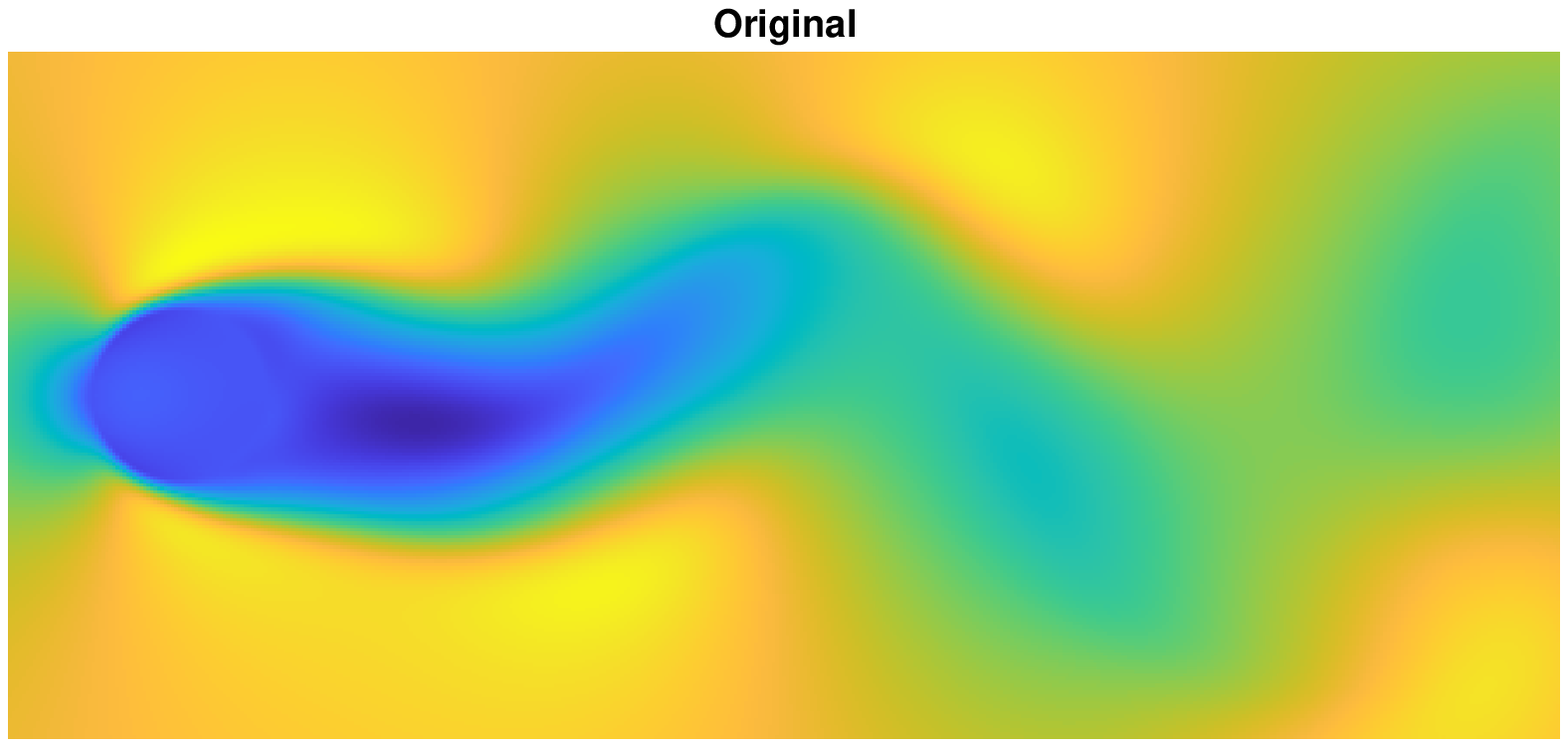}
    
        \caption{The original snapshots from the cyllinder flow data set in \cite{kutz2016dynamic}. The first column from the top presents snapshots $1$, $21$, $41$, and $61$, and the second column presents $81$, $101$, $121$, and $141$.}
\end{figure}

\section{Discussion}

The methods presented in this manuscript give two algorithms for performing a dynamic mode decomposition. Together with the compactness of the Liouville operators, the singular DMD approach guarantees the existence of dynamic modes and convergence through singular value decomposition of compact operators. Singular DMD is a general purpose approach to performing a dynamic mode decomposition for when the domain and range of the operators disagree. The major drawback of this approach is that even though it can guarantee the existence of dynamic modes, which cannot be done for eigenfunction methods, the reconstruction involves the solution of an initial value problem, which is technically more involved than the eigenfunction approach.

The second method adds an additional assumption to the problem, where the domain is assumed to be embedded in the range of the operator. These embeddings frequently occur in the study of RKHSs, where the adjustment of a parameter loosens the requirement on functions within that space. It was demonstrated that this embedding may be established for the exponential dot product kernel, and it also holds for the native spaces of Gaussian RBFs with differing parameters.

Convergence of these routines follow the proof found in \cite{rosenfeld2019dynamic}, which is a general purpose approach for showing convergence of operator level interpolants to the compact operators they are approximating. In particular, given an infinite collection of trajectories for a dynamical system, if the span of the occupation kernels form a dense subset of their respective Hilbert spaces, convergence of the overall algorithm is achieved.

The density of the occupation kernels corresponding to trajectories are easily established for Lipschitz continuous dynamics. This follows since, given any initial point, $x_0$ in $\mathbb{R}^n$, there is a $T_0$ such that the trajectory starting at $x_0$, $\gamma_{x_0}$, exists over the interval $[0,T_0]$. Consider the sequence of occupation kernels indexed by $\delta \in [0,T_0]$, $\Gamma_{\gamma_{x_0},\delta}(x) := \int_0^\delta K(x,\gamma_{x_0}(t))dt$. Then $\frac{1}{\delta} \Gamma_{\gamma_{x_0},\delta} \to K(x,x_0)$ in the Hilbert space norm. Hence, as $x_0$ was arbitrary, every kernel may be approximated by an occupation kernel corresponding to a trajectory, and since kernels are dense in $H$, so are these occupation kernels. Finally, if $H$ and $\tilde H$ are spaces of real analytic functions, the dynamics must also be real analytic by the same proof found in \cite{rosenfeld2019occupation}. Spaces of real analytic functions include the Gaussian RBF and the exponential dot product kernel space.

One interesting result of the structure of the finite rank approximation given in Section \ref{sec:eigenfunctiondmd} is that as $\mu_1 \to \mu_2$, the first two matrices cancel. The matrix computations then approach the computations in \cite{rosenfeld2019dynamic}. Hence, for close enough $\mu_1$ and $\mu_2$ the computations are computationally indistinguishable from \cite{rosenfeld2019dynamic} over a fixed compact set containing the trajectories.

Finally, it should be noted that this methodology is not restricted to spaces of analytic functions, but rather it can work for a large collection of pairs of spaces. As a rule, the range space should be less restrictive as to the collection of functions in that space than the domain space. With this in mind, for many of the cases where compact Liouville operators may be established, the domain will embed into the range. The complications arise in computing the first matrix in \eqref{eq:finiterankrep}, where the inner product of the occupation kernels for the domain are computed in the range space. Hence, the explicit description for spaces of real analytic functions help resolve that computation.

\section{Conclusion}
This manuscript presented a theoretical and algorithmic framework that achieves many long standing goals of dynamic mode decompositions. To wit, by selecting differing domains and ranges for the Liouville operators (sometimes Koopman generators), the resulting operators are compact. This comes at the sacrifice of eigenfunctions when the domain is not embedded in the range of the operator, but achieves well defined dynamic modes and convergence. Reconstruction can then be determined using typical numerical methods for initial value problems. However, in the case of an embedding between the spaces, an algorithm may be established to determine approximate eigenfunctions for the operators, resulting in a more typical DMD routine that also converges.

\bibliographystyle{siamplain}
\bibliography{references}
\end{document}